%% file: article.tex
\documentclass[journal]{IEEEtran}
\usepackage{mathtools}

\input{macros}

\usepackage{cite}

\ifCLASSINFOpdf
\else
\fi
\usepackage{amsmath}
\usepackage{algorithm}
\usepackage{algpseudocode}
\ifCLASSOPTIONcompsoc
 \usepackage[caption=false,font=normalsize,labelfont=sf,textfont=sf]{subfig}
\else
 \usepackage[caption=false,font=footnotesize]{subfig}
\fi
\usepackage{graphicx}
\usepackage{amsfonts}
\usepackage{hyperref}
\hypersetup{
    colorlinks=true,
    linkcolor=blue,
    filecolor=magenta,      
    urlcolor=cyan,
    pdftitle={},
    pdfpagemode=FullScreen,
    }

\usepackage{xcolor}

\usepackage[final]{changes}

\usepackage{amssymb}
\usepackage{amsthm}
\usepackage{multicol}

\newtheorem{prop}{Proposition}
\newtheorem{corollary}{Corollary}[prop]

\theoremstyle{definition}

\newtheorem*{remark}{Remark}

\allowdisplaybreaks

\begin{document}
%
\title{Graph Diffusion-Advection Operator for\\ Directed Graph Signal Processing}
%
%
%

\author{Chun~Hei~Michael~Chan,~\IEEEmembership{Student Member,~IEEE}, %
        Alexandre~Cionca,~\IEEEmembership{Student Member,~IEEE}, 
     \\
    Viktor~Škultéty,~\IEEEmembership{Member,~IEEE}, Dimitri~Van~De~Ville,~\IEEEmembership{Fellow,~IEEE}\vspace*{-5ex}

\thanks{CHM. Chan, A. Cionca, V. Škultéty and D. Van De Ville are with the Neuro-X Institute, Ecole polytechnique fédérale de Lausanne, and the Department of Radiology and Medical Informatics, University of Geneva, Switzerland. E-mail: chunheimichael.chan@epfl.ch, dimitri.vandeville@epfl.ch}%
\thanks{This work was supported by the Swiss National Science Foundation, Sinergia project ``Precision mapping of electrical brain network dynamics with application to epilepsy'', Grant number 209470.}%
\thanks{Manuscript received XXX; revised XXX.}}

%
%

\markboth{Submitted to IEEE Transactions on Signal and Information Processing over Networks}%
{}
%



\maketitle

\begin{abstract}
Graph signal processing (GSP) provides a framework for analyzing data on irregular domains, with applications in neuroscience, finance, chemistry, and social sciences. Classical GSP primarily models symmetric relationships using undirected graphs, yet many real-world systems exhibit asymmetric interactions, motivating extensions to directed graphs. Central to directed GSP is the graph shift operator, typically defined via the directed graph Laplacian. Building on the well-known link between the undirected graph Laplacian and the diffusion operator, we establish a correspondence between the directed graph Laplacian and the diffusion-advection operator. This perspective opens new avenues for addressing crucial points such as frequency ordering, smoothness definition, and the design of spectral and graph filters. Specifically, we introduce two new orderings of frequencies based on the modulus and argument of the eigenvalues,  naturally leading to new definitions of smoothness. Then we present two kernels reflecting diffusive and advective processes, namely the heat and transport kernels, respectively. Finally, we propose novel graph filters obtained by composing  diffusive and advective parts, which approximate ideal spectral filters accurately and characterize the evolution of graph signals in richer ways. All aforementioned developments are illustrated on both synthetic and real graphs, including an application to temperature graph and signals.
\end{abstract}

\begin{IEEEkeywords}
Graph signal processing, graph Fourier transform, graph spectral decomposition, partial differential equation, diffusion-advection.
\end{IEEEkeywords}

%
\IEEEpeerreviewmaketitle

\section{Introduction}
%
%
%
%

\IEEEPARstart{D}{irected} graph signal processing (GSP) aims to analyze signals defined on irregular domains with asymmetric relationships, naturally captured by directed graphs. However, unlike the extension of classical signal processing to undirected graphs \cite{shuman_emerging_2013, ortega_graph_2018, marques_signal_2020}, where the symmetric graph Laplacian admits a real-valued orthonormal eigenbasis, directed graph shift operators (GSO) are asymmetric and, consequently, their adjacency and Laplacian matrices generally lack an orthogonal set of eigenvectors and exhibit complex-valued eigenvalues \cite{sandryhaila_algebraic_2012}.

To address these limitations, several approaches have been proposed. One line of work relies on the Jordan normal form (JNF) of the GSO to bypass potential diagonalization issues \cite{sandryhaila_discrete_2013}. Other approaches numerically retrieve bases that promote orthogonality and real-valued representations \cite{sardellitti_graph_2017, shafipour_directed_2019}. Alternative constructions introduce Hermitian Laplacians \cite{brefeld_graph_2020, wei_hermitian_2024} or leverage singular value decompositions \cite{chen_graph_2023, cionca2025community}, polar decompositions \cite{kwak_frequency_2024}, or Schur decompositions \cite{xiao_joint_2023}.
Despite these advances, a unified framework for directed GSP remains elusive, particularly one that yields an interpretable graph Fourier transform. 

Methods that retain complex-valued spectral representations often lack clear interpretation of the complex eigendecomposition \cite{sandryhaila_big_2014, singh_graph_2016, xiao_joint_2023}, whereas approaches enforcing real-valued bases forgo the ability to manipulate phase \cite{sardellitti_graph_2017, shafipour_directed_2019, brefeld_graph_2020, chen_graph_2023}. This limitation curtails the expressive power inherent in complex-valued graph spectral analysis, which is key to operations such as the graph Hilbert transform \cite{venkitaraman_hilbert_2019,chan2025hilbert} and graph surrogates generation \cite{chan2026statistical}. A further consequence of complex spectra is the ambiguity in frequency ordering. In the undirected case, eigenvalues can be naturally ordered and associated with smoothness via the Dirichlet energy \cite{ortega_graph_2018}. In contrast, directed graphs lack a consensus notion of frequency, which complicates the design and analysis of filters; e.g., previous work has  proposed ordering eigenvalues by their modulus~\cite{singh_graph_2016} or by their real part \cite{sevi_harmonic_2023}, or has introduced new smoothness measures such as directed variation \cite{sardellitti_graph_2017,shafipour_directed_2019}. 

Moreover, existing methods for directed graphs often overlook the original motivation of the graph Laplacian, namely, its interpretation as a discrete approximation of the Laplace operator in Euclidean spaces and of the Laplace-Beltrami operator on manifolds \cite{ortega_graph_2018, gine2006empirical}. To the best of our knowledge, this connection has not been explicitly nor systematically exploited in the directed GSP literature. Related ideas have emerged in graph neural networks (GNNs), albeit from a different angle based on attention layers; i.e., \cite{eliasof2024feature, wu2025supercharging} interpret directed edges in the adjacency matrix as modeling advection and undirected connectivity as modeling diffusion. 

These limitations motivate the need for a framework that explains the role of complex eigenvalues and eigenvectors in directed GSP. We take inspiration from the diffusion-advection operator in physics, which provides insight into interpreting the eigendecomposition of a directed graph Laplacian as the sum of two components: one associated with diffusion and the other with advection. First, we show how these components can be obtained through the spectral decomposition. Then, we propose a novel perspective on frequency ordering, based on the  modulus and the argument of the eigenvalues, which enables the introduction of new smoothness measures. We further show how the diffusion and advection operators can be linked to heat and transport kernels, respectively.  Finally, we propose novel graph filter designs  that are more flexible and expressive than conventional Laplacian-based polynomials. We demonstrate feasibility and effectiveness of the proposed framework for temperature on a real-world sensor network.



\section{Notations}
Lower bold cases denote vectors and $\vc x[n]$ denote its $n$-th element. Upper case bold letters denote matrix with $\ma U[n,m]$ denoting the $(n,m)$-th entry of $\ma U$. The notations $(.)^*$, $(.)^T$, $(.)^H$, $(.)^{-1}$, $(.)^{-H}$ denote conjugate, transpose, hermitian, inverse, and inverse hermitian, respectively. We use $j$ to refer to the complex number such that $j^2=-1$. $\Re(.)$ and $\Im(.)$ indicate real and imaginary parts, respectively. The entry-wise exponential of a vector $\vc x$ will be denoted $e^{\vc x}\in\mathbb{C}^N$ s.t $e^{\vc x}[n] = e^{\vc x[n]}$ and the operator $\Tr(.)$ refers to the trace of a matrix.

\section{Diffusion-Advection on Graphs}
We consider a directed graph $\mathcal{G}=(\mathcal{V},\mathcal{E},\mathcal{W})$, where $\mathcal{V}$ denotes the set of $N$ nodes, $\mathcal{E}\subseteq\mathcal{V}\times\mathcal{V}$ the set of directed edges, and $\mathcal{W}:\mathcal{E}\rightarrow\mathbb{R}^{+}$ assigns a positive weight to each edge. The graph is represented by the weighted adjacency matrix $\mathbf{W}\in\mathbb{R}^{N\times N}$ with entries $\mathbf{W}[n,m]=\mathcal{W}(n,m)$ if $(n,m)\in\mathcal{E}$, and $\mathbf{W}[n,m]=0$ otherwise.

\subsection{From Directed Graph Laplacian to Graph Diffusion-Advection Operator}
\label{sec:graph-diffusion-advection-operator}

\subsubsection{Directed Graph Laplacian}
We start by considering the directed graph Laplacian as the graph shift operator (GSO) \cite{singh_graph_2016}. The directed graph Laplacian is defined as $\ma L = \ma D - \ma W$, where $\ma D[n,n]=\sum_{m=1}^N\ma W[n,m]$ is the diagonal in-degree matrix. 
We subsequently consider its eigendecomposition
\begin{equation}
  \ma L=\ma U \ma \Lambda \ma U^{-1},
\end{equation}
where the eigenvalues $\ma \Lambda[n,n]=\ma \Lambda[n]$ are either real-valued and associated to real-valued eigenvectors $\ma U=[\ma u_n]_{n=1,\ldots,N}$, or complex-valued and then arranged in pairs of complex conjugate eigenvalues and eigenvectors. 


So far in GSP, this operator has been only associated to a ``skewed'' version of diffusion~\cite{mateos2019connecting}. However, due to symmetry constraints, a Laplacian operator associated with diffusion can only be made anisotropic, but not directional. 

\subsubsection{Diffusion-Advection Process}
We borrow fundamental insights from physics, specifically, the diffusion-advection process of an incompressible flow (divergence-free velocity field), with diffusion tensor $\ma D_\text{diff}(\vc x)$ and velocity field $\vc v(\vc x)$. The temporal evolution of the quantity $f(\vc x,t)$ is then governed by the following partial-differential equation~\cite{Batchelor2000-vd}:  
\begin{equation}
    \frac{\partial f}{\partial t} = \underbrace{(\nabla  \cdot ( {\ma D}_\text{diff} \cdot \nabla) - \vc v\cdot \nabla )}_{\mathcal{T}}f, \label{eq:adv-diff-pde}
\end{equation}
where $f\in C^{2}(\Omega \times [0,T))$ with $\Omega\subset \mathbb{R}^N$.
The diffusion term $\nabla \cdot ({\ma D}_\text{diff}) \cdot \nabla f$ represents the spreading of the scalar field due to local exchange, while the advection term $-\vc v\cdot \nabla f$ represents the transport of the scalar field by a velocity field $\vc v$, Importantly, the eigenfunctions and eigenvalues of the diffusion-advection operator $\mathcal{T}$ provide insights into the dynamics of the system. 
\begin{prop} \label{prop:adv-diff-eig}
    For isotropic and homogeneous diffusion $\ma D_\text{diff}=D_\text{diff}\ma {I}$ and constant velocity field ${\vc v}_0$, the analytical solution of the eigensystem
\begin{equation}
    \mathcal{T} \phi_{\vc k} = \lambda_{\vc k} \phi_{\vc k},
\end{equation}
    is well known~\cite{Arfken2005-nc}. Specifically, the eigenvalues $\lambda_{\vc k}$ and eigenfunctions $\phi_{\vc k}$ are of the form 
\begin{equation}
    \phi_{\vc k}({\vc x}) = e^{j {\vc k}\cdot {\vc x}}, \ \ \lambda_{\vc k} = - D_\text{diff}\|{\vc k}\|^2 - j({\vc v}_0\cdot {\vc k}), \label{eq:adv-diff-eig}
\end{equation}
where $\vc k$ is the wavevector. 
\end{prop}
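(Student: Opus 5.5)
The plan is direct substitution: plug the plane-wave ansatz $\phi_{\vc k}(\vc x)=e^{j\vc k\cdot\vc x}$ into $\mathcal{T}$ and check that the result is a scalar multiple of $\phi_{\vc k}$.

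First, use the constant-coefficient assumptions to simplify $\mathcal{T}$. With $\ma D_\text{diff}=D_\text{diff}\ma I$ constant in space, the diffusion term reduces to $\nabla\cdot(D_\text{diff}\nabla f)=D_\text{diff}\Delta f$. With $\vc v=\vc v_0$ constant (hence trivially divergence-free, consistent with the incompressibility assumption), the advection term is $-\vc v_0\cdot\nabla f$. Thus
\begin{equation*}
\mathcal{T}=D_\text{diff}\Delta-\vc v_0\cdot\nabla .
\end{equation*}

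Next, compute the derivatives of the ansatz componentwise:
\begin{equation*}
\partial_{x_i}e^{j\vc k\cdot\vc x}=jk_i\,e^{j\vc k\cdot\vc x},
\end{equation*}
so $\nabla\phi_{\vc k}=j\vc k\,\phi_{\vc k}$. Applying this twice gives $\Delta\phi_{\vc k}=\sum_i (jk_i)^2\phi_{\vc k}=-\|\vc k\|^2\phi_{\vc k}$. Substituting both into $\mathcal{T}$ yields
\begin{equation*}
\mathcal{T}\phi_{\vc k}=\bigl(-D_\text{diff}\|\vc k\|^2-j(\vc v_0\cdot\vc k)\bigr)\phi_{\vc k},
\end{equation*}
which is exactly \eqref{eq:adv-diff-eig}.

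The only delicate point is the claim that these are \emph{the} eigenfunctions rather than merely some of them, since the statement says the eigensystem has solutions "of the form" above. I would handle this by noting that $\mathcal{T}$ is a constant-coefficient operator and therefore commutes with translations. Its eigenfunctions can then be taken to be the characters $e^{j\vc k\cdot\vc x}$ of the translation group. More concretely, on a periodic box or on all of $\mathbb{R}^N$ in the Fourier sense, taking the Fourier transform diagonalizes $\mathcal{T}$ as multiplication by $-D_\text{diff}\|\vc k\|^2-j\vc v_0\cdot\vc k$. The admissible wavevectors $\vc k$ are fixed by the boundary conditions on $\Omega$, for example a lattice of wavevectors for a periodic domain. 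Since the proposition explicitly defers to the standard result~\cite{Arfken2005-nc}, I would state this completeness remark briefly rather than prove it in full.
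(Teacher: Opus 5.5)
Your direct substitution is correct, and it is the standard verification that the paper itself does not write out: the paper gives no proof and simply defers to the cited textbook. One caution on your completeness remark: it holds on a periodic box or on $\mathbb{R}^N$ in the Fourier sense, but on a bounded $\Omega$ with, say, Dirichlet conditions the plane waves are not admissible at all, and for $\vc v_0\neq\vc 0$ the substitution $f=e^{\vc v_0\cdot\vc x/(2D_\text{diff})}g$ shows the spectrum becomes real, so boundary conditions do more than select which $\vc k$ occur.
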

Although the analytical solution is unfeasible for general diffusion and advection, the essential structure of the eigensystem is maintained~\cite{Drazin_Reid_2004}. In particular, the physical interpretation of the spectrum is preserved. The real part of the eigenvalues governs effective diffusivity of each mode by decay behavior, while their imaginary part governs effective advection by inducing phase shifts.


\subsubsection{Graph Diffusion-Advection Operator}

Inspired by the continuous case of Eq.~\eqref{eq:adv-diff-pde}, we propose to interpret the directed Laplacian as a diffusion-advection operator acting on a time-dependent graph signal $\vc x_t$: 
\begin{equation} \label{eq:diff-adv-L}
 \frac{\partial \vc x_t}{\partial t}=-\alpha \ma L \vc x_t,
\end{equation}
where $\alpha$ is a constant tuning the strength of the operator and $\vc x_t$ is defined as a map $\vc x_t: \mathcal{V}\times [0,T)\rightarrow \mathbb{R}$ such that $\partial_t \vc x_t$ exists for all $n\in\mathcal{V}$ and $t\in[0,T)$. 

Given the spectral properties of the physics operator, we propose to use the real and imaginary parts of the eigenvalues to split $\ma L$ into
\begin{equation} \label{eq:graph-adv-diff}
    \ma L = \underbrace{\ma U \ma \Lambda_R\ma U^{-1}}_{\ma L^{\circ}} + \underbrace{j\ma U \ma \Lambda_I\ma U^{-1}}_{\ma L^{\uparrow}},
\end{equation}
where $\ma \Lambda_R = \Re(\ma \Lambda)$ and $\ma \Lambda_I = \Im(\ma \Lambda)$. The new operators $\ma L^{\circ}$ and $\ma L^{\uparrow}$ represent the graph counterparts of diffusion and advection, respectively, by analogy with Eq.~\eqref{eq:adv-diff-eig}. The observation that $\ma L$ is composed of these constituents represents the starting point for our framework. 


\subsubsection{Fundamental Properties of \texorpdfstring{$\ma L^{\circ}$}{Lcirc} and \texorpdfstring{$\ma L^{\uparrow}$}{Lup}}
We highlight the main properties of the newly introduced operators $\ma L^{\circ}$ and $\ma L^{\uparrow}$. The proofs are given in Appendix~\ref{app:prop}. 
\begin{prop} \label{prop:real}
    $\ma L^{\circ}$ and $\ma L^{\uparrow}$ are real-valued.
\end{prop}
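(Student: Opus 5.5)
The plan is to show that each operator equals its own complex conjugate, using only that $\ma L$ is real and the pairing structure of its eigendecomposition. As stated in the setup, we assume $\ma L$ is diagonalizable with $\ma L=\ma U\ma\Lambda\ma U^{-1}$. We also assume the eigenpairs are arranged so that real eigenvalues carry real eigenvectors and non-real eigenvalues occur in conjugate pairs $(\lambda,\vc u)$, $(\lambda^*,\vc u^*)$. Since $\ma L$ is real, this arrangement is always possible: conjugating $\ma L\vc u=\lambda\vc u$ gives $\ma L\vc u^*=\lambda^*\vc u^*$. The key step is to encode the pairing as a permutation matrix $\ma P$ (an involution, $\ma P^2=\ma I$, $\ma P^T=\ma P$) that swaps the columns of each conjugate pair and fixes the real ones. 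Then
\begin{equation}
\ma U^*=\ma U\ma P, \qquad \ma\Lambda^*=\ma P\ma\Lambda\ma P .
\end{equation}
Taking real and imaginary parts of the second identity (with $\ma P$ real) gives $\ma\Lambda_R=\ma P\ma\Lambda_R\ma P$ and $\ma\Lambda_I=-\ma P\ma\Lambda_I\ma P$, since $\Re(\lambda^*)=\Re(\lambda)$ and $\Im(\lambda^*)=-\Im(\lambda)$. Inverting the first identity gives $(\ma U^{-1})^*=(\ma U^*)^{-1}=\ma P\ma U^{-1}$.

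Now conjugate directly. For the diffusion part,
\begin{equation}
(\ma L^{\circ})^*=\ma U^*\ma\Lambda_R(\ma U^{-1})^*=\ma U\ma P\ma\Lambda_R\ma P\ma U^{-1}=\ma U\ma\Lambda_R\ma U^{-1}=\ma L^{\circ},
\end{equation}
so $\ma L^{\circ}$ is real. For the advection part,
\begin{equation}
(\ma L^{\uparrow})^*=-j\,\ma U\ma P\ma\Lambda_I\ma P\ma U^{-1}=-j\,\ma U(-\ma\Lambda_I)\ma U^{-1}=\ma L^{\uparrow},
\end{equation}
so $\ma L^{\uparrow}$ is real as well. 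A shorter route to the second claim is $\ma L^{\uparrow}=\ma L-\ma L^{\circ}$, a difference of two real matrices.

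The main obstacle is making the pairing bookkeeping airtight. A repeated non-real eigenvalue $\lambda$ has an eigenspace $E_\lambda$ whose basis need not be chosen as conjugates of a basis of $E_{\lambda^*}$. One way around this is a basis-free argument: $\ma L^{\circ}$ acts on each eigenspace $E_\lambda$ as multiplication by $\Re\lambda$, and this characterization does not depend on the choice of basis. Conjugation maps $E_\lambda$ onto $E_{\lambda^*}$ and $\Re\lambda^*=\Re\lambda$, so $\overline{\ma L^{\circ}}$ satisfies the same characterization and therefore equals $\ma L^{\circ}$. A cleaner alternative is polynomial interpolation. Choose the Lagrange interpolating polynomial $p$ with $p(\lambda)=\Re\lambda$ on the distinct eigenvalues. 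Because the interpolation data are closed under conjugation ($p(\lambda^*)=p(\lambda)^*$), $p$ has real coefficients. Then $\ma L^{\circ}=p(\ma L)$, and a real-coefficient polynomial in the real matrix $\ma L$ is real. I would present this interpolation argument as the main proof, since it avoids the permutation altogether.
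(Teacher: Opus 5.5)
Your argument is correct, and the route you choose as your main proof differs from the paper's. The paper expands $\ma L^{\circ}=\sum_n \ma\Lambda_R[n]\vc u_n\vc v_n^T$ and $\ma L^{\uparrow}=\sum_n j\ma\Lambda_I[n]\vc u_n\vc v_n^T$ in rank-one spectral projectors, where $\vc v_n^T$ are the rows of $\ma U^{-1}$. It then uses the conjugate pairing assumed in the setup to group each projector with its conjugate. This gives the explicit real expressions $\ma L^{\circ}=\sum_{n\in\mathcal R}\ma\Lambda_R[n]\vc u_n\vc v_n^T+\sum_{n\in\mathcal I}\ma\Lambda_R[n]\Re(\vc u_n\vc v_n^T)$ and $\ma L^{\uparrow}=-\sum_{n\in\mathcal I}\ma\Lambda_I[n]\Im(\vc u_n\vc v_n^T)$.

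Your permutation-matrix computation with $\ma U^*=\ma U\ma P$ is essentially the same idea in compact matrix form. The interpolation argument is genuinely different. Writing $\ma L^{\circ}=p(\ma L)$ with $p$ real gives realness without any choice of eigenbasis. It therefore covers repeated non-real eigenvalues and shows that $\ma L^{\circ}$ and $\ma L^{\uparrow}$ do not depend on which diagonalizing $\ma U$ is used, a point the paper sidesteps by assuming the paired arrangement. It also exhibits both operators as polynomials in $\ma L$, which makes the later commutation property of the sum and rational filters immediate. What the paper's route buys instead is an explicit formula showing, for instance, that $\ma L^{\uparrow}$ is built only from the non-real modes.

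One wording fix: justify the real coefficients by noting that $z\mapsto\overline{p(\bar z)}$ interpolates the same data. This holds because the node set is closed under conjugation and $\Re\bar\lambda=\Re\lambda$ is real, so uniqueness of the interpolant gives $\overline{p(\bar z)}=p(z)$. As you wrote it, ``$p(\lambda^*)=p(\lambda)^*$'' reads like the conclusion rather than the reason.
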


\begin{prop} \label{prop:zero-row-sum}
    $\ma L^{\circ}$ and $\ma L^{\uparrow}$ are zero-row sum matrices: $$\ma L^{\circ}\vc 1 = \vc 0 \text{ and } \ma L^{\uparrow}\vc 1 = \vc 0.$$
\end{prop}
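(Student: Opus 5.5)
The plan is to exploit the fact that $\vc 1$ is an eigenvector of $\ma L$ with the real eigenvalue $0$. Then $\ma L^{\circ}$ and $\ma L^{\uparrow}$, which are spectral functions of $\ma L$ built from the same eigenvector matrix $\ma U$, both annihilate $\vc 1$.

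First, since $\ma D[n,n]=\sum_m \ma W[n,m]$, we have $\ma L\vc 1 = \ma D\vc 1 - \ma W\vc 1 = \vc 0$, so $\vc 1$ lies in the eigenspace of $\ma L$ for the eigenvalue $0$. Working with the eigendecomposition $\ma L=\ma U\ma\Lambda\ma U^{-1}$ assumed in the paper, $\ma U$ is invertible. Hence $\ma U^{-1}\vc 1 = \vc c$ is well defined, and $\vc 1 = \ma U\vc c = \sum_n \vc c[n]\vc u_n$.

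Next, I apply $\ma L$ to this expansion:
\[
\vc 0=\ma L\vc 1=\ma U\ma\Lambda\vc c = \sum_n \ma\Lambda[n]\,\vc c[n]\,\vc u_n .
\]
Because the columns $\vc u_n$ are linearly independent, this gives $\ma\Lambda[n]\vc c[n]=0$ for every $n$. So $\vc c[n]\neq 0$ only when $\ma\Lambda[n]=0$, and for those indices both $\Re(\ma\Lambda[n])=0$ and $\Im(\ma\Lambda[n])=0$. It follows that $\ma\Lambda_R\vc c=\vc 0$ and $\ma\Lambda_I\vc c=\vc 0$, and therefore
\[
\ma L^{\circ}\vc 1=\ma U\ma\Lambda_R\ma U^{-1}\vc 1=\ma U\ma\Lambda_R\vc c=\vc 0,\qquad \ma L^{\uparrow}\vc 1=j\ma U\ma\Lambda_I\vc c=\vc 0 .
\]

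The main subtlety is the case where $0$ is a repeated eigenvalue, for example when the graph is not strongly connected. Then $\vc 1$ need not equal a single column of $\ma U$. The coefficient argument above handles this without any extra work, because it only uses the fact that every eigenvalue appearing in the expansion of $\vc 1$ is zero. One could instead take the cheaper route of assuming $\vc 1$ is itself a column of $\ma U$ and reading off $\ma U^{-1}\vc 1=\vc e_{n_0}$. I will avoid that shortcut, since it breaks down for repeated zero eigenvalues. The diagonalizability of $\ma L$ is inherited from the paper's setup and is not proved here.
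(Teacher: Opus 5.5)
Your proof is correct and follows the same idea as the paper's one-line argument: $\vc 1$ lies in the eigenvalue-$0$ eigenspace of $\ma L$, and since $\ma L^{\circ}$ and $\ma L^{\uparrow}$ are built on the same eigenvectors with $\Re(0)=\Im(0)=0$, both annihilate $\vc 1$. Your expansion $\ma U^{-1}\vc 1=\vc c$ with $\ma\Lambda\vc c=\vc 0$ makes this rigorous, including the repeated-zero-eigenvalue case, where the paper tacitly treats $\vc 1$ as a column of $\ma U$.
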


\noindent 
Owing to this property we can retrieve the adjacency matrix associated to $\ma L^{\circ}$ and  $\ma L^{\uparrow}$ by subtracting their diagonal in-degree matrix.

\begin{prop} \label{prop:eigenvalues}
    $\ma L^{\circ}$ is of rank $N-1$ and has eigenvalues that are real and non-negative, while $\ma L^{\uparrow}$ has at most rank $N-1$ and purely imaginary eigenvalues.
\end{prop}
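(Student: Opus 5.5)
The plan is to read everything off the eigendecomposition $\ma L^{\circ} = \ma U \ma \Lambda_R \ma U^{-1}$ and $\ma L^{\uparrow} = j\ma U \ma \Lambda_I \ma U^{-1}$. These are similarity transforms of diagonal matrices, so the eigenvalues of $\ma L^{\circ}$ are exactly $\Re(\ma\Lambda[n])$ and those of $\ma L^{\uparrow}$ are exactly $j\,\Im(\ma\Lambda[n])$, for $n=1,\dots,N$. Since both are diagonalizable, the rank of each equals the number of its nonzero eigenvalues, counted with multiplicity. Everything therefore reduces to two spectral facts about the directed Laplacian $\ma L=\ma D-\ma W$:
\begin{itemize}
\item every eigenvalue satisfies $\Re(\ma\Lambda[n])\ge 0$;
\item the eigenvalue $0$ is simple.
\end{itemize}

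\textbf{Sign of the real parts.} I would use the Gershgorin circle theorem. Row $n$ of $\ma L$ has diagonal entry $\ma D[n,n]=\sum_m \ma W[n,m]$, and the absolute values of its off-diagonal entries sum to $\sum_{m\neq n}\ma W[n,m]\le \ma D[n,n]$. Each Gershgorin disc is thus centred at a nonnegative real number with radius no larger than that number, so it lies in the closed right half-plane $\{\Re z\ge 0\}$. Hence $\ma\Lambda_R\succeq 0$, and the eigenvalues of $\ma L^{\circ}$ are real and non-negative. Purely imaginary eigenvalues of $\ma L^{\uparrow}$ are immediate from $j\ma\Lambda_I$ with $\ma\Lambda_I$ real.

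\textbf{Rank of $\ma L^{\circ}$.} By Proposition~\ref{prop:zero-row-sum}, or directly from $\ma L\vc 1=\vc 0$, the value $0$ is an eigenvalue of both operators, so both ranks are at most $N-1$. This already settles the claim for $\ma L^{\uparrow}$. For $\ma L^{\circ}$ I must show that no other eigenvalue of $\ma L$ has zero real part, i.e.\ that $\ma L$ has no eigenvalue on the imaginary axis except a simple zero. Two things are needed.
\begin{itemize}
\item \emph{Nonzero eigenvalues.} The Gershgorin discs touch the imaginary axis only at the origin, so any eigenvalue with $\Re(\ma\Lambda[n])=0$ must equal $0$. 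This is the step where I must be careful that tangency at $0$ does not admit $jy$ with $y\neq0$. It does not: a disc centred at $d\ge0$ with radius $r\le d$ contains $jy$ only if $d^2+y^2\le r^2\le d^2$, which forces $y=0$.
\item \emph{Simplicity of $0$.} This is the genuine obstacle, and it is where a connectivity assumption on the graph is needed. I would invoke the standard fact that the multiplicity of $0$ for $\ma L=\ma D-\ma W$ equals the number of reaches (or, in the strongly connected case, that it is $1$). The argument is Perron–Frobenius applied to $c\ma I-\ma L$ for $c\ge\max_n\ma D[n,n]$: this is a nonnegative matrix with row sums $c$, irreducible when the graph is strongly connected, so its spectral radius $c$ is a simple eigenvalue and $0$ is a simple eigenvalue of $\ma L$.
\end{itemize}
I would state the needed assumption explicitly: strong connectivity, or more generally a unique reach/rooted spanning tree. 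Diagonalizability of $\ma L$ is already assumed by the paper's setup.

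\textbf{Conclusion.} Under this assumption $\ma\Lambda_R$ has exactly one zero entry and $N-1$ positive entries, so $\operatorname{rank}\ma L^{\circ}=N-1$. By contrast, $\ma\Lambda_I$ may have additional zeros from the real eigenvalues of $\ma L$, which is why the statement only asserts $\operatorname{rank}\ma L^{\uparrow}\le N-1$.
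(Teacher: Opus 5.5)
Your argument is correct, but it takes a different route from the paper. The paper's proof is essentially a citation. It imports from \cite{singh_graph_2016} that the eigenvalues of $\ma L$ have non-negative real part, and that for a graph with ``a single connected component'' zero is the only vanishing eigenvalue and the others are ``positive''. It then reads off the spectra of $\ma L^{\circ}$ and $\ma L^{\uparrow}$ from the shared eigenbasis, as in your first paragraph.

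You instead prove the two spectral facts from scratch. Gershgorin gives the closed right half-plane, and your tangency computation rules out nonzero purely imaginary eigenvalues. (Each disc has centre equal to its radius $\sum_{m\neq n}\ma W[n,m]$. With a self-loop the diagonal entry is $\ma D[n,n]-\ma W[n,n]$ rather than $\ma D[n,n]$, but nothing changes.) Perron--Frobenius on $c\ma I-\ma L$ then gives algebraic simplicity of zero.

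What your version buys is precision on the two points the rank claim actually hinges on, which the paper's wording glosses over. First, ``the remaining eigenvalues are positive'' must be read as ``have strictly positive real part'', and that is exactly your tangency step. Second, ``single connected component'' must mean strong connectivity or a unique reach, not weak connectivity. Take $\ma W[3,1]=\ma W[3,2]=1$ with all other weights zero. The graph is weakly connected, and $\ma L$ is diagonalizable: it is lower triangular with diagonal $(0,0,2)$ and a two-dimensional kernel. Yet $\ma L^{\circ}=\ma L$ has rank $1=N-2$.

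The paper's route is shorter. Yours is self-contained and states the hypothesis the proposition genuinely needs.
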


\begin{prop} \label{prop:trace}
    The traces of $\ma L^{\circ}$ and $\ma L^{\uparrow}$ are both real-valued and are such that by $\Tr(\ma L^{\circ}) > \Tr(\ma L^{\uparrow})=0$.
\end{prop}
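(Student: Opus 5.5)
The plan is to compute both traces from their eigenvalues, using similarity invariance of the trace. From \eqref{eq:graph-adv-diff},
\begin{equation*}
\Tr(\ma L^{\circ})=\Tr(\ma U\ma\Lambda_R\ma U^{-1})=\Tr(\ma\Lambda_R)=\sum_{n}\Re(\ma\Lambda[n]),\qquad
\Tr(\ma L^{\uparrow})=j\,\Tr(\ma\Lambda_I)=j\sum_n \Im(\ma\Lambda[n]).
\end{equation*}
Realness of both traces then follows directly from Proposition~\ref{prop:real}, since both matrices are real-valued. One can also see it from the spectrum: $\Tr(\ma\Lambda_R)$ is a sum of real numbers.

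For $\Tr(\ma L^{\uparrow})=0$, I would use that $\ma L$ is a real matrix. Its non-real eigenvalues therefore come in conjugate pairs $\lambda,\lambda^*$ with equal algebraic multiplicities, as stated in the eigendecomposition paragraph. The imaginary parts in $\sum_n\Im(\ma\Lambda[n])$ cancel pairwise, and real eigenvalues contribute $0$. Hence $\Tr(\ma L^{\uparrow})=0$. A quicker check is also available: $\Tr(\ma L)=\Tr(\ma L^{\circ})+\Tr(\ma L^{\uparrow})$ is real, and $\Tr(\ma L^{\uparrow})=j\cdot(\text{real})$ is also real by Proposition~\ref{prop:real}, so it must vanish.

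For the strict inequality $\Tr(\ma L^{\circ})>0$, I would use $\Tr(\ma L^{\circ})=\Tr(\ma L)-\Tr(\ma L^{\uparrow})=\Tr(\ma L)=\Tr(\ma D)-\Tr(\ma W)$. If the graph has no self-loops, $\Tr(\ma W)=0$, and then $\Tr(\ma D)=\sum_{n,m}\ma W[n,m]>0$ as soon as the edge set is nonempty. If self-loops are allowed, each loop weight appears in both $\ma D$ and $\ma W$ and cancels. This leaves the sum of the off-diagonal weights, which is positive provided at least one non-loop edge exists.

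Alternatively, Proposition~\ref{prop:eigenvalues} gives that the eigenvalues of $\ma L^{\circ}$ (namely $\Re(\ma\Lambda[n])$) are non-negative and that exactly one is zero, since the rank is $N-1$. The remaining $N-1\geq 1$ eigenvalues are then strictly positive, so their sum is strictly positive for $N\ge 2$.

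The main obstacle is this strict positivity. The cleanest route is the degree argument, which needs only a mild nondegeneracy assumption (at least one non-loop edge). I will state that assumption explicitly rather than rely on the rank claim.
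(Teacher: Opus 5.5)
Your proof is correct. Computing both traces from the spectrum, the realness claim, and $\Tr(\ma L^{\uparrow})=0$ via cancellation of conjugate pairs all match the paper's argument.

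You differ on the strict inequality. The paper argues that the eigenvalues of $\ma L$ have non-negative real parts and concludes $\Tr(\ma L^{\circ})>0$ directly. You instead pass to the vertex domain via $\Tr(\ma L^{\circ})=\Tr(\ma L)-\Tr(\ma L^{\uparrow})=\Tr(\ma L)=\sum_n\sum_{m\neq n}\ma W[n,m]$.

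Your route buys two things. First, it needs no spectral localization result for $\ma L$, such as a Gershgorin-type bound or the cited fact about non-negative real parts. Second, it actually delivers strictness. Non-negativity of the real parts alone only gives $\Tr(\ma L^{\circ})\geq 0$. The paper's step therefore implicitly needs at least one eigenvalue with strictly positive real part, for example from the connectivity assumption behind Proposition~\ref{prop:eigenvalues}. You make the minimal hypothesis explicit (one edge that is not a self-loop) and also obtain the exact value of the trace as the total off-diagonal edge weight.

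The paper's route keeps the argument inside the spectral picture of non-negative diffusion rates $\ma\Lambda_R[n]\geq 0$, which is what motivates reading $\ma L^{\circ}$ as diffusion. You were right not to lean on the rank-$(N-1)$ claim as your main argument, since it rests on connectivity assumptions the proposition does not state.
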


\begin{prop} \label{prop:maximal-poly-diff}
    Let $N_C$ be the number of pairs of conjugate eigenvectors, the degree of the minimal polynomial of $\ma L^{\circ}$ is $\text{deg}(m_{\ma L^{\circ}}(\lambda))=N-N_C$.
\end{prop}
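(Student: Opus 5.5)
Since $\ma L^{\circ} = \ma U \ma \Lambda_R \ma U^{-1}$ is diagonalizable by construction, its minimal polynomial is
$$m_{\ma L^{\circ}}(\lambda)=\prod_{\mu\in\sigma(\ma L^{\circ})}(\lambda-\mu),$$
the product over its \emph{distinct} eigenvalues, each with multiplicity one. Indeed, any polynomial $p$ satisfies $p(\ma L^{\circ}) = \ma U\, p(\ma \Lambda_R)\, \ma U^{-1}$, which vanishes iff $p$ vanishes on every diagonal entry of $\ma \Lambda_R$. So the degree of $m_{\ma L^{\circ}}$ equals the number of distinct values among $\Re(\ma\Lambda[n])$, $n=1,\dots,N$, and the whole proof reduces to counting these values.

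The count goes as follows. The spectrum of $\ma L$ consists of real eigenvalues and $N_C$ conjugate pairs $\lambda,\lambda^*$ with $\Im(\lambda)\neq 0$. Each conjugate pair contributes two equal entries $\Re(\lambda)=\Re(\lambda^*)$ to $\ma \Lambda_R$, so it collapses into a single eigenvalue of $\ma L^{\circ}$. This removes exactly one distinct value per pair, which gives $N - 2N_C + N_C = N - N_C$ values. I would present this by listing the diagonal of $\ma \Lambda_R$ and noting that the map from eigenvalues of $\ma L$ to real parts is two-to-one exactly on the conjugate pairs.

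The main obstacle is that this count is only exact under a genericity assumption, which the statement leaves implicit. The assumption is that no further coincidences occur:
\begin{itemize}
\item the eigenvalues of $\ma L$ are distinct, as already implicitly assumed when the pairs are counted by $N_C$;
\item the real parts of distinct conjugate pairs differ from each other;
\item these real parts differ from the real eigenvalues of $\ma L$ (including the eigenvalue $0$ associated with $\vc 1$, cf.\ Proposition~\ref{prop:zero-row-sum}).
\end{itemize}
I would state this assumption explicitly: "assuming the real parts of non-conjugate eigenvalues are pairwise distinct". Under it the degree is exactly $N-N_C$. Without it, the argument still gives the inequality $\deg m_{\ma L^{\circ}}\le N-N_C$, with equality holding generically.

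For completeness, I would also note two consistency checks. First, Proposition~\ref{prop:real} guarantees that $\ma L^{\circ}$ is real, so its minimal polynomial has real coefficients, which is consistent with the real roots $\Re(\lambda)$. Second, Proposition~\ref{prop:eigenvalues} places these roots in $[0,\infty)$, with $0$ simple.
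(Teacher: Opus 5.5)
Your argument is correct and follows the same route as the paper. The paper also writes the characteristic polynomial of $\ma L^{\circ}$, merges the two equal roots $\Re(\lambda)=\Re(\lambda^*)$ of each conjugate pair into one factor, and reads off a minimal polynomial of degree $N-N_C$. Your use of diagonalizability via $p(\ma L^{\circ})=\ma U\,p(\ma\Lambda_R)\,\ma U^{-1}$ is the correct justification where the paper loosely cites Cayley--Hamilton. Your genericity caveat is genuinely needed, since the paper's proof silently assumes all the listed real parts are distinct. Without it the equality fails: for the undirected complete graph $K_N$ with $N\ge 3$ one has $N_C=0$, but $m_{\ma L^{\circ}}(\lambda)=\lambda(\lambda-N)$ has degree $2$.
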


\begin{corollary} \label{corr:maximal-poly-diff}
    There exists a set of coefficients $c_k\in\mathbb{C}$ such that the matrix polynomial 
    \begin{equation}
        m_{\ma L^{\circ}}(\ma L^{\circ})=(\ma L^{\circ})^{N-N_C} + \sum_{k=0}^{N-N_C-1} c_k(\ma L^{\circ})^{k}=0.
    \end{equation}
\end{corollary}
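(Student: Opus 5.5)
The corollary follows from Proposition~\ref{prop:maximal-poly-diff} together with the defining property of the minimal polynomial. The approach is to write the minimal polynomial explicitly from the spectrum of $\ma L^{\circ}$, then evaluate it at $\ma L^{\circ}$ through the eigendecomposition $\ma L^{\circ}=\ma U\ma \Lambda_R\ma U^{-1}$.

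First, I will recall that $\ma L^{\circ}$ is diagonalizable by construction, since $\ma U$ is invertible. Its minimal polynomial is therefore $m_{\ma L^{\circ}}(\lambda)=\prod_{\mu}(\lambda-\mu)$, where the product runs over the \emph{distinct} eigenvalues $\mu$ among the diagonal entries of $\ma \Lambda_R$. Each pair of complex conjugate eigenvalues of $\ma L$ contributes a single repeated real part. Proposition~\ref{prop:maximal-poly-diff} states that the number of distinct values is $N-N_C$, so $m_{\ma L^{\circ}}$ is monic of that degree.

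Second, I will expand the product into the monic form $\lambda^{N-N_C}+\sum_{k=0}^{N-N_C-1}c_k\lambda^k$. The coefficients $c_k$ are, up to sign, the elementary symmetric functions of the distinct eigenvalues. They are real, hence in particular in $\mathbb{C}$, as the statement requires.

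Third, I will evaluate the polynomial at the matrix. This gives
\begin{equation*}
m_{\ma L^{\circ}}(\ma L^{\circ})=\ma U\, m_{\ma L^{\circ}}(\ma \Lambda_R)\,\ma U^{-1}.
\end{equation*}
The matrix $m_{\ma L^{\circ}}(\ma \Lambda_R)$ is diagonal with entries $m_{\ma L^{\circ}}(\ma \Lambda_R[n])$. Each of these vanishes, because every $\ma \Lambda_R[n]$ is a root of the product. Hence $m_{\ma L^{\circ}}(\ma L^{\circ})=0$.

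The only delicate point is the degree count, and it is supplied entirely by Proposition~\ref{prop:maximal-poly-diff}. That proposition implicitly assumes generic real parts, with no further coincidences beyond conjugate pairs. If real parts coincided further, the degree would only decrease. Multiplying by extra linear factors would still give an annihilating monic polynomial of degree $N-N_C$, so the existence claim remains valid in any case.
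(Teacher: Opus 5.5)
Your proof is correct and follows essentially the paper's route: the paper writes the characteristic polynomial of $\ma L^{\circ}$, collapses the squared factor $(\lambda-\ma \Lambda_R[n_1])^2$ contributed by each conjugate pair to obtain a monic polynomial of degree $N-N_C$, and concludes that it annihilates $\ma L^{\circ}$. Your direct check $m_{\ma L^{\circ}}(\ma L^{\circ})=\ma U\, m_{\ma L^{\circ}}(\ma \Lambda_R)\,\ma U^{-1}=0$, which uses diagonalizability, is the right justification for that last step; the paper instead cites Cayley--Hamilton, which only guarantees annihilation by the characteristic polynomial. Your padding remark also correctly covers further coincidences among the real parts, a case the paper's degree count silently ignores.
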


\begin{prop} \label{prop:maximal-poly-adv}
    Let $N_Z$ be the number of eigenvalues with zero imaginary part, then it holds that the degree of the minimal polynomial of $\ma L^{\uparrow}$ is $\text{deg}(m_{\ma L^{\uparrow}}(\lambda))=N-N_Z+1$.
\end{prop}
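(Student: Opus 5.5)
The plan is to use that $\ma L^{\uparrow}=j\ma U\ma\Lambda_I\ma U^{-1}$ is diagonalizable by construction. The eigendecomposition $\ma L=\ma U\ma\Lambda\ma U^{-1}$ is assumed throughout Section~\ref{sec:graph-diffusion-advection-operator}, so $\ma U$ is invertible. For a diagonalizable matrix, the minimal polynomial is $\prod_{\mu}(\lambda-\mu)$, where $\mu$ runs over the \emph{distinct} eigenvalues, each with multiplicity one. Hence $\deg(m_{\ma L^{\uparrow}})$ equals the number of distinct diagonal entries of $j\ma\Lambda_I$, and the proof reduces to counting these. This is the same argument I would use for Proposition~\ref{prop:maximal-poly-diff}, applied to the imaginary parts instead of the real parts.

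First, I split the spectrum of $\ma L$ into the $N_Z$ eigenvalues with $\Im(\ma\Lambda[n])=0$ and the remaining $N-N_Z$ non-real eigenvalues. All $N_Z$ eigenvalues of the first group are sent to the single eigenvalue $0$ of $\ma L^{\uparrow}$. Moreover $N_Z\ge 1$: by Proposition~\ref{prop:zero-row-sum} (or directly, $\ma L\vc 1=\vc 0$), $0$ is an eigenvalue of $\ma L$, so the value $0$ really occurs among the eigenvalues of $\ma L^{\uparrow}$. This is consistent with the rank statement of Proposition~\ref{prop:eigenvalues}. The first group therefore contributes exactly one distinct eigenvalue.

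Second, the non-real eigenvalues come in conjugate pairs $a\pm jb$ with $b\neq 0$. Such a pair yields the eigenvalues $\mp b$ of $\ma L^{\uparrow}$ (since $j\cdot(\pm jb)=\mp b$). These are distinct from each other and from $0$. If the imaginary parts $|\Im(\ma\Lambda[n])|$ are pairwise different across distinct conjugate pairs, then the $N-N_Z$ non-real eigenvalues of $\ma L$ give $N-N_Z$ distinct nonzero eigenvalues of $\ma L^{\uparrow}$. Adding the single eigenvalue $0$ gives $N-N_Z+1$ distinct eigenvalues, and hence a minimal polynomial of exactly that degree.

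The main obstacle is precisely this distinctness step. Two conjugate pairs with different real parts but equal imaginary parts (for example $1\pm j$ and $2\pm j$) collapse to the same eigenvalues of $\ma L^{\uparrow}$, which lowers the degree. The same happens if a non-real eigenvalue of $\ma L$ is repeated. So the statement must be read under the genericity assumption that the non-real eigenvalues of $\ma L$ are simple and have pairwise distinct imaginary parts up to sign. In general, the formula is only an upper bound: $\deg(m_{\ma L^{\uparrow}})\le N-N_Z+1$. I would state this assumption explicitly, note that it holds for almost every choice of edge weights (coincidences of imaginary parts form a measure-zero set), and give the upper bound without it.
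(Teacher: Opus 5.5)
Your counting argument is correct and is essentially the paper's. The paper writes $p_{\ma L^{\uparrow}}(\lambda)=\lambda^{|\mathcal R|}\prod(\cdots)$, collapses the zero root to a single factor, and keeps one factor for every remaining root. That silently presupposes the condition you state explicitly, that non-real eigenvalues are simple with pairwise distinct imaginary parts up to sign, and your counterexample shows the condition is genuinely needed. Your use of diagonalizability is also the correct justification for $m_{\ma L^{\uparrow}}(\ma L^{\uparrow})=0$, where the paper cites Cayley--Hamilton.

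One slip, which does not affect the count: $\ma\Lambda_I$ holds the real numbers $\pm b$, so the pair $a\pm jb$ gives the purely imaginary eigenvalues $\pm jb$ of $\ma L^{\uparrow}$ (consistent with Proposition~\ref{prop:eigenvalues}), not $\mp b$.
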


\begin{corollary} \label{corr:maximal-poly-adv}
    There exists a set of coefficients $c_k\in\mathbb{C}$ such that the matrix polynomial 
    \begin{equation}
        m_{\ma L^{\uparrow}}(\ma L^{\uparrow})=(\ma L^{\uparrow})^{N-N_Z+1} + \sum_{k=0}^{N-N_Z} c_k(\ma L^{\uparrow})^{k}=0.
    \end{equation}
\end{corollary}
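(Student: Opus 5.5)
The statement is Corollary~\ref{corr:maximal-poly-adv}. I would deduce it directly from Proposition~\ref{prop:maximal-poly-adv} together with the definition of the minimal polynomial.

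First I would recall that, since $\ma L=\ma U\ma\Lambda\ma U^{-1}$ is diagonalizable, $\ma L^{\uparrow}=j\ma U\ma\Lambda_I\ma U^{-1}$ is also diagonalizable, with eigenvalues $j\Im(\ma\Lambda[n])$. Its minimal polynomial is therefore
\begin{equation*}
m_{\ma L^{\uparrow}}(\lambda)=\prod_{\mu\in\sigma(\ma L^{\uparrow})}(\lambda-\mu),
\end{equation*}
where the product runs over the \emph{distinct} eigenvalues $\mu$. This polynomial is monic by construction. Its degree equals $N-N_Z+1$ by Proposition~\ref{prop:maximal-poly-adv}: the $N_Z$ eigenvalues with zero imaginary part all collapse to the single root $0$, and the remaining $N-N_Z$ imaginary parts are counted as distinct, as that proposition asserts.

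Next I would expand the product as $\lambda^{N-N_Z+1}+\sum_{k=0}^{N-N_Z}c_k\lambda^k$. The coefficients $c_k$ are elementary symmetric functions of the roots, so $c_k\in\mathbb{C}$.

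It then remains to show that $m_{\ma L^{\uparrow}}(\ma L^{\uparrow})=0$. Substituting the matrix into the product gives
\begin{equation*}
m_{\ma L^{\uparrow}}(\ma L^{\uparrow})=\ma U\, m_{\ma L^{\uparrow}}(j\ma\Lambda_I)\,\ma U^{-1}.
\end{equation*}
The middle factor is diagonal, and each of its diagonal entries is $m_{\ma L^{\uparrow}}$ evaluated at a root, hence zero. The equivalent abstract route is that the minimal polynomial annihilates its matrix by definition (Cayley--Hamilton).

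The only delicate point is that the degree count relies on the distinctness assumption built into Proposition~\ref{prop:maximal-poly-adv}, which I simply take as given. I would add one remark. Conjugate pairs $\pm j\Im(\ma\Lambda[n])$ make the polynomial have real coefficients, since its roots are closed under conjugation. The corollary's claim that $c_k\in\mathbb{C}$ therefore holds a fortiori.
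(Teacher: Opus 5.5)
Your proposal is correct and follows essentially the paper's route: identify $m_{\ma L^{\uparrow}}$ as the product of $(\lambda-\mu)$ over the distinct eigenvalues $0$ and $\pm j\ma\Lambda_I[n]$, read off the degree $N-N_Z+1$, and conclude that it annihilates $\ma L^{\uparrow}$; your explicit check $\ma U\, m_{\ma L^{\uparrow}}(j\ma\Lambda_I)\,\ma U^{-1}=0$ is a cleaner justification than invoking Cayley--Hamilton as the paper does, since that theorem concerns the characteristic polynomial while the minimal polynomial annihilates by definition. The distinctness caveat you flag is not needed for the corollary as literally stated, because $\lambda\prod_{n:\,\ma\Lambda_I[n]\neq 0}(\lambda-j\ma\Lambda_I[n])$ is monic of degree exactly $N-N_Z+1$ and vanishes on $\ma L^{\uparrow}$ by the same diagonal argument.
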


\noindent
Both Corollaries~\ref{corr:maximal-poly-diff} and~\ref{corr:maximal-poly-adv} exploit the specificity of the new operators to provide a tighter bound than the Cayley-Hamilton Theorem, dictating the maximal degree for spectral filter design through polynomial expressions. Notably the $\text{deg}(m_{\ma L^{\uparrow}}(\lambda))$ and $\text{deg}(m_{\ma L^{\circ}}(\lambda))$ can be different.

\subsection{On the Properties of Normal Graph Operator}
We study the case of normal graph operator. $\ma L$ is said to be normal if it satisfies $\ma L\ma L^T=\ma L^T\ma L$. This setting leads to a number of important structural properties, which we detail below. The proofs are given in Appendix~\ref{app:prop-normal}. 
\begin{prop} \label{prop:normal-symmetry-asymmetry}
Let $\ma L$ be a normal matrix. Then, its components $\ma L^{\circ}$ and $\ma L^{\uparrow}$ are symmetric and skew-symmetric, respectively, so we have
\begin{equation}
(\ma L^{\circ})^T = \ma L^{\circ}, \quad (\ma L^{\uparrow})^T = -\ma L^{\uparrow}.
\end{equation}
Moreover, these components admit the explicit decomposition
\begin{equation}
\ma L^{\circ} = \frac{1}{2}\left(\ma L + \ma L^T\right), \quad
\ma L^{\uparrow} = \frac{1}{2}\left(\ma L - \ma L^T\right).
\end{equation}
\end{prop}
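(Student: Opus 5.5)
We will use the spectral theorem for normal matrices. Since $\ma L$ is real and normal ($\ma L\ma L^T=\ma L^T\ma L$, equivalently $\ma L\ma L^H=\ma L^H\ma L$), it is unitarily diagonalizable. We therefore choose the eigenvector matrix $\ma U$ in the decomposition of Section~\ref{sec:graph-diffusion-advection-operator} to be unitary, so that $\ma U^{-1}=\ma U^H$.

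One point must be checked: the definitions of $\ma L^{\circ}$ and $\ma L^{\uparrow}$ in Eq.~\eqref{eq:graph-adv-diff} do not depend on the choice of eigenbasis. Indeed, $\ma U\ma \Lambda_R\ma U^{-1}=\sum_{\lambda}\Re(\lambda)\ma P_\lambda$, where $\ma P_\lambda$ is the spectral projector onto the eigenspace of $\lambda$. These projectors are intrinsic to $\ma L$, so we are free to use the unitary $\ma U$. The same holds for $\ma L^{\uparrow}$ with $\Im(\lambda)$. This invariance is the only subtle step; without it, a different non-unitary eigenbasis could in principle give a different splitting.

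With $\ma U$ unitary, we compute
\begin{equation}
\ma L^T=\ma L^H=\ma U\ma \Lambda^*\ma U^H=\ma U(\ma \Lambda_R-j\ma \Lambda_I)\ma U^H,
\end{equation}
where the first equality uses that $\ma L$ is real. Adding and subtracting this to $\ma L=\ma U(\ma \Lambda_R+j\ma \Lambda_I)\ma U^H$ gives
\begin{equation}
\ma L+\ma L^T=2\ma U\ma \Lambda_R\ma U^H=2\ma L^{\circ}, \qquad \ma L-\ma L^T=2j\ma U\ma \Lambda_I\ma U^H=2\ma L^{\uparrow}.
\end{equation}
This is exactly the explicit decomposition in the statement.

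The symmetry properties then follow immediately. $\frac12(\ma L+\ma L^T)$ is symmetric, and $\frac12(\ma L-\ma L^T)$ is skew-symmetric. Both are real, which is consistent with Proposition~\ref{prop:real}.

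Alternatively, the symmetry can be read off directly from the spectral form. $\ma U\ma \Lambda_R\ma U^H$ is Hermitian, and being real (Proposition~\ref{prop:real}), it is symmetric. Likewise $j\ma U\ma \Lambda_I\ma U^H$ is skew-Hermitian and real, hence skew-symmetric.
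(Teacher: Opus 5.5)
Your proof is correct and is essentially the paper's argument: unitarily diagonalize the normal $\ma L$ as $\ma U\ma\Lambda\ma U^H$, then compute $\frac{1}{2}(\ma L\pm\ma L^H)$. These equal $\ma U\Re(\ma\Lambda)\ma U^H$ and $j\,\ma U\Im(\ma\Lambda)\ma U^H$, with $\ma L^H=\ma L^T$ because $\ma L$ is real. Your extra check that $\ma L^{\circ}$ and $\ma L^{\uparrow}$ do not depend on the chosen eigenbasis, via the spectral projectors, is a step the paper uses silently when it switches to a unitary $\ma U$, so it is a worthwhile addition.
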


The symmetric component $\ma L^{\circ}$ corresponds to a standard Laplacian operator, while the skew-Hermitian component $\ma L^{\uparrow}$ acts as a differential operator. As shown in Proposition~\ref{prop:zero-row-sum}, both $\ma L^{\circ}$ and $\ma L^{\uparrow}$ have zero row sum, which is a necessary condition for Laplacian and differential operators. Consequently, the decomposition in~\eqref{eq:graph-adv-diff} naturally aligns with the splitting of $\ma L$ into a symmetric diffusion component and a skew-symmetric advection component.

On a graph, the divergence of an operator quantifies the net flow imbalance at each node. When edge weights are viewed as flows \cite{jiang2011statistical}, the divergence of an operator $\ma L$ can be defined as the difference between incoming and outgoing edge weights \cite{grady2010discrete}
\begin{equation}
    (\nabla \cdot \ma L)[n] = \sum_{(n,m)\in\mathcal{E}} \ma L[n,m] \;-\;
 \sum_{(m,n)\in\mathcal{E}} \ma L[m,n].
\end{equation}
\begin{prop} \label{prop:normal-divergence-free}
If $\ma L$ is normal, then $\ma L^{\uparrow}$ is divergence-free; i.e., $\nabla \cdot \ma L^{\uparrow} = \vc 0$.
\end{prop}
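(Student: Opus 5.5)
The plan is to combine Proposition~\ref{prop:normal-symmetry-asymmetry} with Proposition~\ref{prop:zero-row-sum}. Write the divergence of $\ma L^{\uparrow}$ at node $n$ as the $n$-th row sum minus the $n$-th column sum:
\begin{equation*}
(\nabla\cdot \ma L^{\uparrow})[n] = \sum_{m} \ma L^{\uparrow}[n,m] - \sum_{m} \ma L^{\uparrow}[m,n].
\end{equation*}
In vector form this is $(\ma L^{\uparrow}-(\ma L^{\uparrow})^T)\vc 1$. The first step is to justify summing over all $m$ rather than over edges. By Proposition~\ref{prop:normal-symmetry-asymmetry}, $\ma L^{\uparrow}=\tfrac12(\ma L-\ma L^T)$. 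For $n\neq m$ its entries are $\tfrac12(\ma W[m,n]-\ma W[n,m])$, and its diagonal is zero. So $\ma L^{\uparrow}$ is supported on $\mathcal{E}$ together with the reversed edges. I read the sums in the divergence as ranging over the support of the operator, and entries outside the support contribute zero anyway. This bookkeeping is the only delicate point. With that convention, the edge sums equal the full row and column sums.

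The second step uses the same proposition: skew-symmetry gives $(\ma L^{\uparrow})^T=-\ma L^{\uparrow}$. Hence
\begin{equation*}
\nabla\cdot\ma L^{\uparrow} = 2\,\ma L^{\uparrow}\vc 1.
\end{equation*}
The third step is Proposition~\ref{prop:zero-row-sum}, which gives $\ma L^{\uparrow}\vc 1=\vc 0$. Therefore the divergence vanishes.

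As an independent check, we can verify directly that $\tfrac12(\ma L-\ma L^T)\vc 1=\vc 0$, which requires $\ma L^T\vc 1=\vc 0$. Normality of $\ma L$ together with $\ma L\vc 1=\vc 0$ gives $\|\ma L^T\vc 1\|=\|\ma L\vc 1\|=0$. This is because normal matrices satisfy $\|\ma L\vc x\|=\|\ma L^T\vc x\|$ for all $\vc x$. So $\ma L$ also has zero column sums, meaning in-degree equals out-degree, and the claim follows consistently. The main obstacle is the interpretation of the edge-indexed sums, not the algebra.
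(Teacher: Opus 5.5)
Your proof is correct, but it follows a different route from the paper's. The paper imports the key fact from \cite{wu2005algebraic}: a normal directed Laplacian comes from a balanced (Eulerian) graph. It then concludes that the graph associated with $\tfrac12(\ma L-\ma L^T)$ is balanced as well, and reads off zero divergence from the definition.

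Your main argument never uses the Eulerian property of the original graph. You write the divergence as $(\ma L^{\uparrow}-(\ma L^{\uparrow})^T)\vc 1$ and let skew-symmetry (Proposition~\ref{prop:normal-symmetry-asymmetry}) and zero row sums (Proposition~\ref{prop:zero-row-sum}) do all the work. This even gives the stronger conclusion that every row sum and every column sum of $\ma L^{\uparrow}$ vanishes. Your independent check, $\|\ma L^T\vc 1\|=\|\ma L\vc 1\|=0$ for normal $\ma L$, is a one-line proof of exactly the fact the paper cites, so your write-up is self-contained. You also spell out the edge-set bookkeeping (sums over the support of a signed operator with zero diagonal), which the paper leaves implicit in ``the graph corresponding to $(\ma L-\ma L^T)/2$''.

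What the paper's framing buys is that it singles out balance as the operative property, which connects to the subsequent Remark. Balance in fact suffices on its own: for any balanced diagonalizable $\ma L$, $\vc 1^T$ annihilates every eigenvector with nonzero eigenvalue. Hence $\vc 1^T\ma L^{\uparrow}=\vc 0^T$ holds alongside $\ma L^{\uparrow}\vc 1=\vc 0$. Your main route, by contrast, needs normality to obtain skew-symmetry.
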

The normality of $\ma L$ is consistent with the underlying assumption of divergence-freeness, or equivalently, incompressible flow in Eq.~\eqref{eq:adv-diff-pde}. 



\begin{remark}
    In graph theoretical terms, $\ma L^\uparrow$ is balanced (Eulerian) \cite{diestel2005graph}, which ensures the absence of topological sinks and sources \cite{grady2010discrete}, a structural condition that implies the existence of a cycle cover which in turn guarantees diagonalizability of the graph diffusion-advection operator \cite{chan2025hilbert}. 
\end{remark}


\subsection{Graph Signal Filtering with \texorpdfstring{$\ma L^{\circ}$}{Lcirc} and \texorpdfstring{$\ma L^{\uparrow}$}{Lup}}
We now revisit the concept when the operators are applied to a real-valued graph signal $\vc x\in \mathbb{R}^N$. First of all, the GFT of $\vc x$ is defined as 
\begin{equation}
    \hat{\vc x} = \text{GFT}\{ \vc x\} = {\ma U}^{-1}{\vc x}.
\end{equation}
Since $\vc x$ is real-valued, the resulting coefficients exhibit conjugate symmetry; i.e., $\hat{\vc x}[k_1]=\hat{\vc x}[k_2]^\star$ \cite{singh_graph_2016}. The inverse GFT is given by
\begin{equation}
    \vc x = \text{IGFT}\{ \hat{\vc x}\} = {\ma U}\hat{\vc x}.
\end{equation}
This now allows us to introduce graph-signal filtering in both the vertex domain and the graph spectral domain. 

\subsubsection{Graph Filtering} \label{sec:graph-filtering}
Graph filters are linear operators acting on graph signals and can be represented as a polynomial of the graph shift operator $\ma L$. Formally, a graph filter built upon an operator such as $\ma L$, is defined through a kernel function $g:\mathbb{C}^{N\times N}\rightarrow \mathbb{C}^{N\times N}$, such that
\begin{equation}
    g(\ma L) = g_0 \ma I + g_1 \ma L + \ldots + g_{K} \ma L^{K},
\end{equation}
where $g_k\in\mathbb{C}$ are the filter coefficients and $K$ is the order of the polynomial filter. The output of the filter when applied to a graph signal $\vc x$ is given by
\begin{equation}
    \vc y = g(\ma L) \vc x = \sum_{k=0}^K g_{k} \ma L^{k} \vc x.
\end{equation}

\subsubsection{Spectral Filtering} \label{sec:spectral-filtering}
Equipped with the GFT, we can characterize graph filters in the frequency domain leading to the frequency response of the filter. Thus, a graph filter $g(\ma L)$ can be expressed in the graph spectral domain as
\begin{equation}
    g(\ma L) =  \ma U \left(\sum_{k=0}^K g_k \ma \Lambda^k\right) \ma U^{-1} = \ma U g({\ma \Lambda}) \ma U^{-1},
\end{equation}
where $g({\ma \Lambda})$ is a diagonal matrix whose diagonal elements $g({\ma \Lambda})[n]$ are the frequency response of the filter at frequency index $n$.
Equivalently, one can first define the spectral filter $g({\ma \Lambda})$ and then apply it the graph frequency domain to Fourier coefficients $\hat{\vc x}$. The filtered signal $\vc y$ is given by
\begin{equation}
    \vc y = \text{IGFT}\{ g(\ma \Lambda) \hat{\vc x}\},
\end{equation}
where $\hat{\vc x} = \text{GFT}\{\vc x\}$.

\subsubsection{Heat and Transport Kernels}
Given the interpretation of the directed Laplacian as a diffusion-advection operator from Eq.~\eqref{eq:diff-adv-L}, we can find the solution to this system at any time $t>0$ for initial conditions $\vc x_0$ as $\vc y_t = g^\diamond_t(\ma L) \vc x_0$, where we define the graph filter equivalent to the heat-transport kernel
\begin{equation}\label{eq:trans-heat-graph}
    g^{\diamond}_t(\ma L)= e^{-\alpha t {\ma L}},
\end{equation}
which maps into the spectral domain as 
\begin{equation}\label{eq:trans-heat-spectral}
    g^{\diamond}_t(\ma \Lambda)= e^{-\alpha t {\ma\Lambda}}.
\end{equation}
Note that $g^{\diamond}_t$ admit an equivalent polynomial representation via its series expansion.
Interestingly, using $\ma L=\ma L^\circ + \ma L^\uparrow$, we can further decompose Eq.~\eqref{eq:trans-heat-graph} as
\begin{equation}
    g^\diamond_t(\ma L) = \underbrace{e^{-\alpha t {\ma L}^\circ}}_{g^\diamond_t(\ma L^\circ)} \underbrace{e^{-\alpha t {\ma L}^\uparrow}}_{g^{\diamond}_t(\ma L^\uparrow)},
\end{equation}
which allows us to define the heat and transport kernels that constitute $g^\diamond(\ma L)$, but can now be controlled separately. In addition, the spectral expressions are readily obtained as 
\begin{equation}
    g^{\diamond}_t(\ma \Lambda_R) = e^{-\alpha t \ma \Lambda_R}, \quad 
    g^{\diamond}_t(j\ma \Lambda_I) = e^{-j\alpha t \ma \Lambda_I}.  
\end{equation}
Thus the heat kernel affects the modulus of the GFT coefficients as a low-pass filter with exponential decay, while the transport kernel alters their phase with unit gain.

\begin{remark}
When considering a cycle graph of $N$ nodes and its eigenvalue decomposition, we have
\begin{equation*}
\ma \Lambda_R[n]=1-\cos\!\Big(\frac{2\pi n}{N}\Big),
\quad 
\ma \Lambda_I[n]=\sin\!\Big(\frac{2\pi n}{N}\Big),
\end{equation*}
from which one can retrieve the expressions of the the heat and transport kernels as 
\begin{eqnarray*}
g^{\diamond}_t(\ma \Lambda_R)[n] &\!\!\!\! = & \!\!\!\! e^{-\alpha t \left(1-\cos \left(\frac{2\pi n}{N}\right)\right)}
= e^{-\alpha t \left(\frac{2\pi n}{N}\right)^2/2}, \\
g_t^{\diamond}(j\ma \Lambda_I)[n] &\!\!\!\! = & \!\!\!\! e^{-j\alpha t \sin \left(\frac{2\pi n}{N}\right)}
= e^{-j\alpha t \frac{2\pi n}{N}}.
\end{eqnarray*}
In Fig.~\ref{fig:temporal-cycle}, the effect of these kernels is illustrated in the continuous domain and on the cycle graph. 
\end{remark}


\begin{figure}[t]
    \centering
    \subfloat[Diffusion: attenuation and dilation \label{fig:temporal-diff}]{%
       \includegraphics[width=1\linewidth]{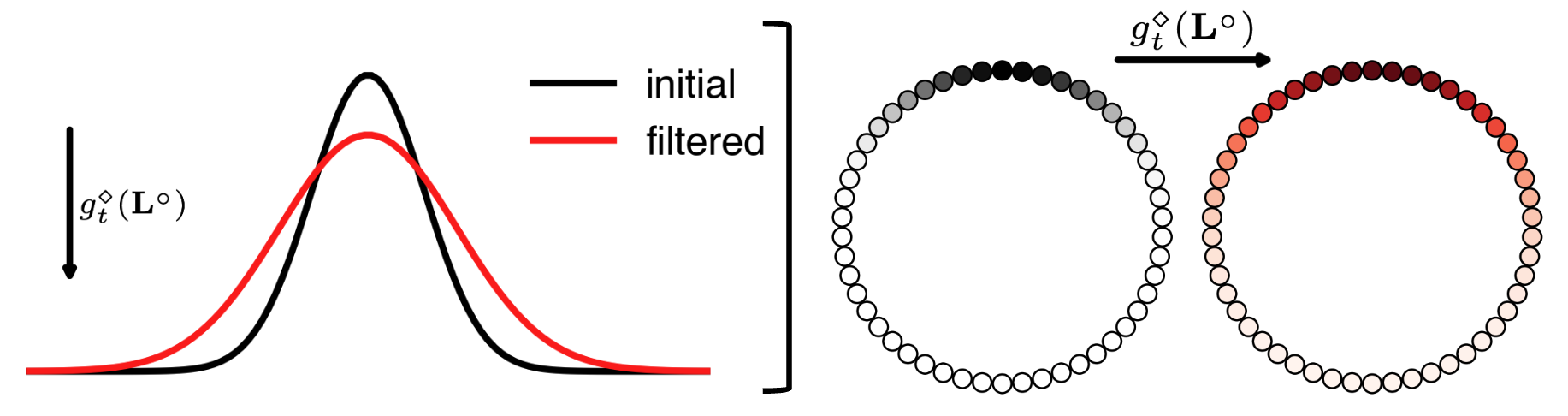}}
       \hfill
    \subfloat[Advection: translation \label{fig:temporal-adv}]{%
       \includegraphics[width=1\linewidth]{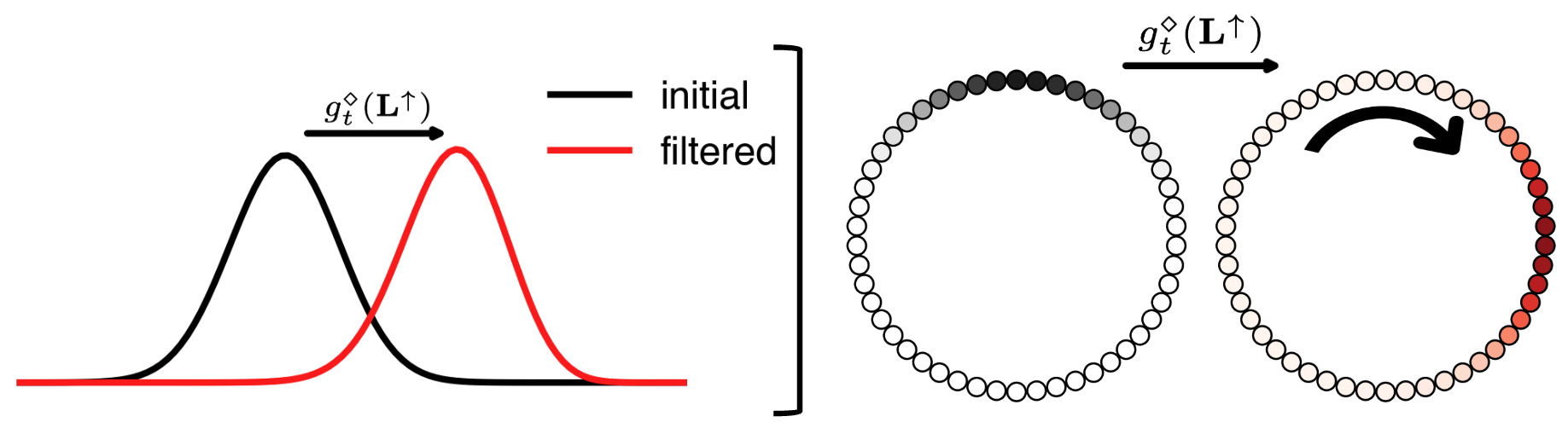}}\\
  \caption{\label{fig:temporal-cycle}Illustration of the action of the (a)~heat kernel and (b)~transport kernel on a signal in the continuous domain and on a cycle graph.}
\end{figure}


       


\section{Graph Signal Spectrum and Smoothness}
\subsection{Directed Graph Signal Spectrum}
While the GFT decomposes a signal into contributions by different eigenvectors that can be seen as a ``harmonic'' basis, the complex nature of the eigenvalues hinders their interpretation as ``frequencies'' and, in particular, how to order them. Given the analogy with the eigenvalue from the diffusion-advection operator in physics, we identify two natural ways of ordering. 

\subsubsection{Modulus-Based Ordering}
\label{sec:amplitude-ordering}
We first look at the dependency of the modulus of the eigenvalue with the norm of the wavevector, which can be seen as a frequency of the component across the graph; i.e., larger wavevectors for smaller wavelength, and vice versa. From Eq.~\eqref{eq:adv-diff-eig}, we can expand the modulus of $\lambda_{\vc k}$ as 
\begin{align} \label{eq:amplitude-ordering}
|\lambda_\vc k| &= \sqrt{\bigl(D_\text{diff} \|\mathbf{k}\|^{2}\bigr)^{2}
      + \bigl(\mathbf{v}\!\cdot\!\mathbf{k}\bigr)^{2}} \nonumber\\
&= \sqrt{D^{2}_\text{diff} \|\mathbf{k}\|^{4}
      + \|\mathbf{v}\|^{2}\,\|\mathbf{k}\|^{2}
        \cos^{2}\!\bigl(\angle(\mathbf{v},\mathbf{k})\bigr)} \nonumber\\
&\approx D_\text{diff}\,\|\mathbf{k}\|^{2},
\end{align}
where the last approximation holds due to the diffusion term that dominates; i.e., $D_\text{diff} \| \vc k \|^2 \gg |\vc v \cdot \vc k|$. As the modulus of the eigenvalue is proportional to the product of the diffusion coefficient and the wavevector $D_\text{diff}\|{\vc k}\|^2$, smaller values correspond to lower frequencies. Therefore, we propose radial ordering in the complex plane to rank frequencies according to increasing modulus of $\ma \Lambda[n]$.

\begin{remark}
    In undirected graphs, modulus ordering reverts to the conventional ordering according to increasing scalar eigenvalues.
\end{remark}

\subsubsection{Argument-Based Ordering}
\label{sec:angular-ordering}
Again taking inspiration from physics, we can derive from the argument of the eigenvalue that
\begin{eqnarray} \label{eq:angular-ordering}
  \!\!\!\!  \left|\tan(\angle \lambda_{\vc k}) \right| & = & \frac{\left|\vc v\cdot \vc k\right|}{D_\text{diff}\|\vc k\|^2} \nonumber\\
  & = & \frac{\|\vc v\| \left|\cos(\angle (\vc v,\vc k))\right|}{D_\text{diff}\|\vc k\|} \nonumber\\
  & \propto & \frac{\|{\vc v}\|}{D_\text{diff}\|\vc k\|}. 
\end{eqnarray}
Since the tangent of the angle is inversely proportional to $\|\vc k\|$, large $\left|\tan(\angle \lambda_\vc k)\right|$ are associated with a low oscillation frequencies. Furthermore, for the same $\left\|\vc k\right\|$, larger $\left|\tan(\angle \lambda_\vc k)\right|$ is associated to advection rather than diffusion, as shown by the factor $\|{\vc v}\|/D_\text{diff}$. 
Because the tangent is an increasing function, the ordering of $\tan(\angle \lambda_\vc k)$ is equivalent to that of $\angle \lambda_\vc k$. We therefore propose argument ordering to rank frequencies according to decreasing $
|\angle(\ma \Lambda[n])|$.

Given that the constant graph signal is an eigenvector of the GSO and is associated eigenvalue $0$, its angle is undefined, as it involves a division by zero. By convention, we instead set the constant graph signal as the first frequency in argument ordering. 
Therefore, the constant eigenvector is the lowest frequency eigenvector in both modulus and argument orderings. 

\begin{remark}
Modulus and argument orderings are equivalent in a cycle graph. 
\end{remark}


\subsection{Signal Smoothness} 
Ordering eigenvectors is essential to give meaning to the GFT, however, there is also the related concern to define smoothness of a graph signal. Thus, in analogy to the smoothness presented in previous works \cite{sandryhaila_discrete_2014, shafipour_directed_2019, singh_graph_2016, kwak_frequency_2024}, we define two measures to characterize the modulus- and argument-based smoothness by leveraging $\ma L^{\uparrow}$ and $\ma  L^{\circ}$, respectively:
\begin{eqnarray}
    \mathcal{S}^{\uparrow}(\vc x) = ||\ma L^{\uparrowcirc} \vc x||_2^2 \label{eq:angular-smoothness}, \quad 
    \mathcal{S}^{\circ}(\vc x) = ||\ma L\vc x||_2^2, \label{eq:radial-smoothness}
\end{eqnarray}
where $\ma L^{\uparrowcirc} = \ma L^{\uparrow} (\ma L^{\circ})^{\dagger}=\ma Uj\ma\Lambda_I\ma \Lambda_R^{\dagger}\ma U^{-1}$ and $(\cdot)^{\dagger}$ denotes the spectral pseudoinverse. These smoothness measures applied to the eigenvectors respect the proposed orderings as shown by the following Proposition. 

\begin{prop} \label{prop:smoothness-ordering} Modulus and argument orderings of eigenvalues induce equivalent smoothness relations of their corresponding eigenvectors:
\begin{align*}
|\ma \Lambda[n]| \leq |\ma \Lambda[m]|
& \;\Rightarrow\; 
\mathcal{S}^{\circ}(\vc u_n) \leq \mathcal{S}^{\circ}(\vc u_m), \\
|\angle(\ma \Lambda[n])| \leq |\angle(\ma \Lambda[m])|
& \;\Rightarrow\; 
\mathcal{S}^{\uparrow}(\vc u_n) \leq \mathcal{S}^{\uparrow}(\vc u_m),
\end{align*}
where $n,m\in [1,N]$. 
\end{prop}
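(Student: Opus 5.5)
The plan is to evaluate both smoothness measures directly on the eigenvectors, using the eigendecomposition $\ma L=\ma U\ma\Lambda\ma U^{-1}$ and the spectral definitions of $\ma L^{\circ}$, $\ma L^{\uparrow}$ and $\ma L^{\uparrowcirc}$. Since $\ma U^{-1}\vc u_n=\vc e_n$, any operator of the form $\ma U\ma M\ma U^{-1}$ with $\ma M$ diagonal acts on $\vc u_n$ as multiplication by $\ma M[n]$. This gives
\begin{equation*}
\mathcal{S}^{\circ}(\vc u_n)=\|\ma L\vc u_n\|_2^2=|\ma\Lambda[n]|^2\,\|\vc u_n\|_2^2 .
\end{equation*}
Likewise, $\ma L^{\uparrowcirc}\vc u_n = j\,\ma\Lambda_I[n]\,\ma\Lambda_R^{\dagger}[n]\,\vc u_n$ gives
\begin{equation*}
\mathcal{S}^{\uparrow}(\vc u_n)=\Bigl(\frac{\ma\Lambda_I[n]}{\ma\Lambda_R[n]}\Bigr)^{2}\|\vc u_n\|_2^2 .
\end{equation*}
Here we use that, by Proposition~\ref{prop:eigenvalues}, $\ma\Lambda_R[n]\ge 0$, with $\ma\Lambda_R[n]=0$ only for the constant eigenvector, where the pseudoinverse yields $0$.

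I will fix the normalization $\|\vc u_n\|_2=1$ for all eigenvectors; this is the natural GFT convention and is needed for the statement to be meaningful, since otherwise the measures depend on an arbitrary scaling. The first implication then follows from the monotonicity of $x\mapsto x^2$ on $[0,\infty)$: $|\ma\Lambda[n]|\le|\ma\Lambda[m]|$ implies $|\ma\Lambda[n]|^2\le|\ma\Lambda[m]|^2$.

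For the second implication, write $\ma\Lambda[n]=\ma\Lambda_R[n]+j\ma\Lambda_I[n]$ with $\ma\Lambda_R[n]>0$ for the nonconstant modes. This places $\angle\ma\Lambda[n]$ in $(-\pi/2,\pi/2)$, where $|\tan(\angle\ma\Lambda[n])|=|\ma\Lambda_I[n]|/\ma\Lambda_R[n]$. Because $\theta\mapsto\tan^2\theta$ is increasing in $|\theta|$ on $[0,\pi/2)$, the inequality $|\angle\ma\Lambda[n]|\le|\angle\ma\Lambda[m]|$ gives $(\ma\Lambda_I[n]/\ma\Lambda_R[n])^2\le(\ma\Lambda_I[m]/\ma\Lambda_R[m])^2$, which is exactly $\mathcal{S}^{\uparrow}(\vc u_n)\le\mathcal{S}^{\uparrow}(\vc u_m)$.

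The main obstacle is the degenerate cases. The first is the constant eigenvector. Its argument is undefined and set by convention as lowest, and its smoothness $\mathcal{S}^{\uparrow}(\vc 1)=0$ is the minimum value, so it is consistent with the convention. The second is possible zero real parts of nonconstant eigenvalues. I would exclude these by invoking Proposition~\ref{prop:eigenvalues}, which gives rank $N-1$ for $\ma L^{\circ}$ and hence $\ma\Lambda_R[n]>0$ for all nonconstant modes; this keeps every argument strictly inside $(-\pi/2,\pi/2)$. A brief remark would also note that conjugate pairs share both modulus and $|\angle|$, so the orderings are well defined up to ties.
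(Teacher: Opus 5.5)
Your main computation is the same as the paper's proof. Evaluating on unit-norm eigenvectors gives $\mathcal{S}^{\circ}(\vc u_n)=|\ma\Lambda[n]|^2$ and $\mathcal{S}^{\uparrow}(\vc u_n)=(\ma\Lambda_I[n]/\ma\Lambda_R[n])^2=\tan^2(\angle\ma\Lambda[n])$, and both implications then follow by monotonicity. Your explicit normalization, and your use of $\ma\Lambda_R[n]>0$ to keep the arguments of nonzero eigenvalues in $(-\pi/2,\pi/2)$, are steps the paper leaves implicit.

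The one step that fails is your treatment of the constant eigenvector, which the paper's proof does not address at all. Argument ordering ranks frequencies by \emph{decreasing} $|\angle\ma\Lambda[n]|$. Placing $\vc 1$ first (lowest frequency) by convention therefore treats its argument as maximal. The second implication says larger $|\angle|$ forces larger $\mathcal{S}^{\uparrow}$, which matches the paper's remark that larger $\mathcal{S}^{\uparrow}$ means smoother along the flow. So under this convention, $\mathcal{S}^{\uparrow}(\vc 1)$ would have to be the \emph{largest} value.

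The pseudoinverse instead gives $\mathcal{S}^{\uparrow}(\vc 1)=0$, the smallest possible value. That value is shared with every mode whose eigenvalue is real and positive, and those modes sit at the opposite end of the argument ordering. Take $m$ to be the constant mode and $n$ any mode with $\ma\Lambda_I[n]\neq 0$. The premise $|\angle\ma\Lambda[n]|\le|\angle\ma\Lambda[m]|$ holds by convention, but $\mathcal{S}^{\uparrow}(\vc u_n)>0=\mathcal{S}^{\uparrow}(\vc u_m)$. Your claim that the value $0$ is ``consistent with the convention'' is therefore backwards. Minimal $\mathcal{S}^{\uparrow}$ corresponds to the highest argument-frequency, not the lowest.

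There are two ways to repair this.
\begin{itemize}
\item State and prove the second implication only for $\ma\Lambda[n],\ma\Lambda[m]\neq 0$. Your argument is complete in that range.
\item Adopt the convention $\angle 0:=0$ explicitly. The implication then holds trivially for $\vc 1$, since the only modes with $|\angle\ma\Lambda|=0$ are the real ones, which also have $\mathcal{S}^{\uparrow}=0$. However, this places $\vc 1$ last rather than first in the argument ordering, so it is not the paper's convention.
\end{itemize}
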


Intuitively, for a graph signal $\vc x$, smaller values of $\mathcal{S}^{\circ}(\vc x)$ indicate smoothness over undirected parts of the graph, while larger values of $\mathcal{S}^{\uparrow}(\vc x)$ indicate a smooth signal following the directness.

\section{Filter Designs}
\label{sec:filters}

 

\subsection{Ideal Graph Filters}
\label{sec:filter-ideal}
The graph spectral domain can be conveniently used to specify and implement a graph filter. Specifically, for a graph signal $\vc x\in\mathbb{R}^N$ and a filter specified as a spectral window in the form of diagonal matrix $\hat{\ma H}\in\mathbb{C}^{N\times N}$, we can find the filtered signal $\vc y$ as
\begin{equation}
    {\vc y} = \text{IGFT}\left\{ \hat{\ma H}\, \text{GFT}\left\{{\vc x}\right\} \right\} = \ma U \hat{\ma H} \ma U^{-1} \vc x.
\end{equation}

We now introduce a number of ideal filters in the graph spectral domain. First, an ideal lowpass in the diffusive sense and based on the the modulus ordering of the eigenvalues is the diagonal matrix such that 
\begin{equation*}
    \hat{\ma H}_{\text{low}}^{\circ}[n, n]=
    \begin{cases}
    1 \text{,  if } n\in \mathcal{L}^\circ,\\
    0\text{,  otherwise,}
    \end{cases}
\end{equation*}
with $\mathcal{L}^\circ = \{n\in \mathbb{N}\ |\ |\ma \Lambda[n]| \leq T\}$ and where $T$ is a cutoff parameter. 
Similarly, a lowpass in the advective sense uses the argument-based ordering: 
\begin{equation*}
    \hat{\ma H}_{\text{low}}^{\uparrow}[n, n]=
    \begin{cases}
    1 \text{,  if } n\in \mathcal{L}^\uparrow,\\
    0\text{,  otherwise,}
    \end{cases}
\end{equation*}
with $\mathcal{L}^\uparrow = \{n\in \mathbb{N}\ |\ |\angle \ma \Lambda[n]| \geq T\}$. Bandpass or highpass filters can be easily defined using the same principle. Finally, we also introduce an ideal phase shift filter as
\begin{equation*}
    \hat{\ma H}_{\text{shift}}^{(q)}[n,n]= e^{j q \angle \ma \Lambda[n]},
\end{equation*}
where $q$ is the shift parameter. 

\subsection{Graph Shift Operator Filters}
\label{sec:gso-filter-approx}
To implement a graph filter ${\ma H}$ in vertex domain, given a spectral specification $\hat{\ma H}_\text{spec}$, one common approach is the  approximation by finite impulse response (FIR) graph filters \cite{shuman_chebyshev_2011, sakiyama_design_2017,segarra2015distributed}. In particular, we seek a $K$-th degree polynomial of $\ma L$:
\begin{equation*}
\tilde{h}(\ma L) = \sum_{k=0}^{K}\vct \gamma[k]\,\ma L^{k}.
\end{equation*}
Via the eigendecomposition of $\ma L=\ma U\ma \Lambda\ma U^{-1}$, the approximation of $\hat{ \ma H}_\text{spec}$ is formulated as the least-squares problem \cite{sakiyama_design_2017, segarra2015distributed}
\begin{equation} \label{eq:polynomial-fit}
\vct{\gamma}^{\star} =
\argmin_{\boldsymbol{\gamma}\in\mathbb{C}^{K+1}}
\Bigl\lVert
\hat{\ma H}_\text{spec} - \sum_{k=0}^{K}\vct \gamma[k]\,\ma \Lambda^{k}
\Bigr\rVert_2^{2}.
\end{equation}
By introducing the following Vandermonde matrix of order $K+1$,
\begin{equation*}
\ma V_K(\ma \Lambda)
=
\begin{bmatrix}
1 & \ma \Lambda[0] & \ma \Lambda^2[0] & \cdots & \ma \Lambda^K[0] \\
1 & \ma \Lambda[1] & \ma \Lambda^2[1] & \cdots & \ma \Lambda^K[1] \\
\vdots & \vdots & \vdots & \ddots & \vdots \\
1 & \ma \Lambda[N\!-\!1] & \ma \Lambda^2[N\!-\!1] & \cdots & \ma \Lambda^K[N\!-\!1]
\end{bmatrix},
\end{equation*}
we can rewrite Eq.~\eqref{eq:polynomial-fit} as
\begin{equation*}
\ma V_K(\ma \Lambda)\,\vct \gamma = \operatorname{diag}(\hat{\ma H}_\text{spec}),
\end{equation*}
and thus the least-squares solution is
\begin{equation}
\vct \gamma = \ma V_K(\ma \Lambda)^{\dagger}\,\operatorname{diag}(\hat{\ma H}_\text{spec}).
\end{equation}


       


\begin{figure*}[hbt!]
    \centering
    \subfloat[Ordering Comparisons \label{fig:ordering-frequencies}]{%
       \includegraphics[width=0.95\linewidth]{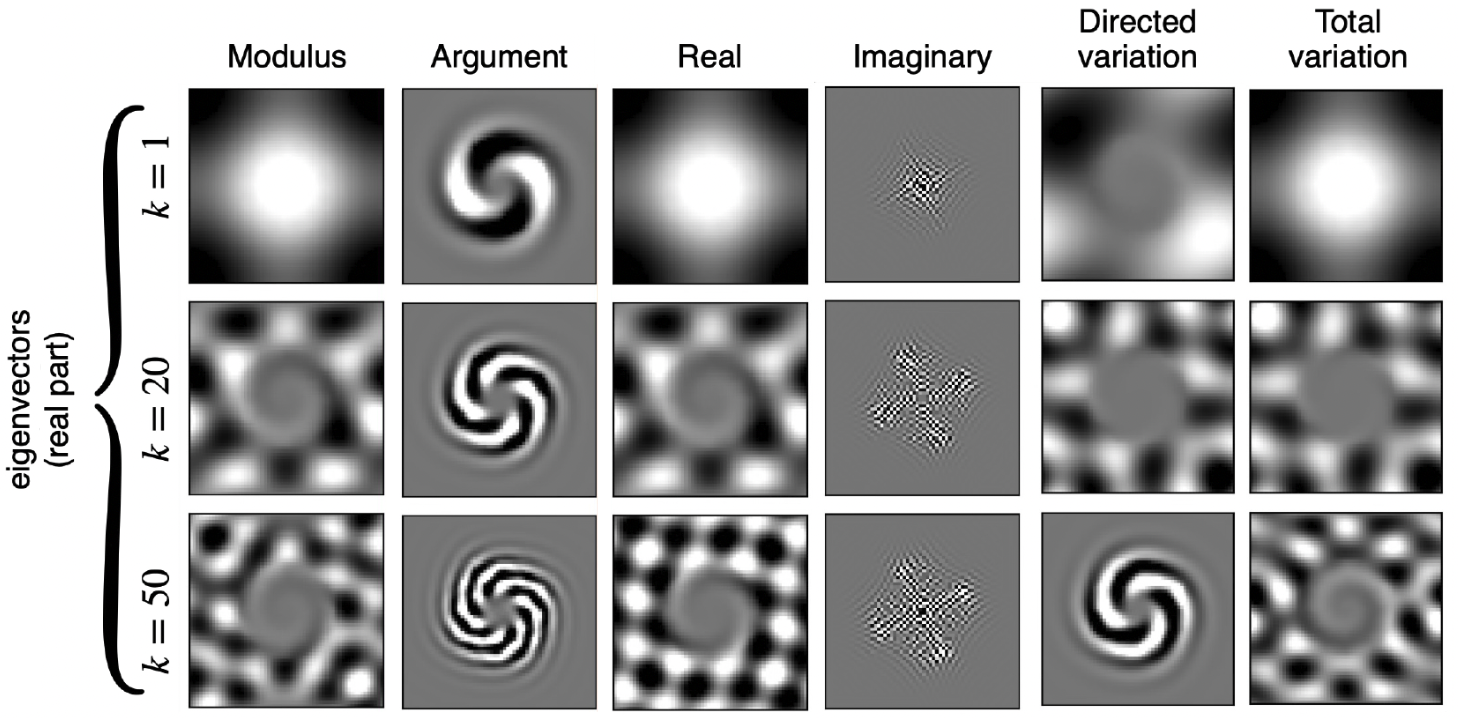}} \\
    \subfloat[Vortex Field \label{fig:flow-field-vortex}]{%
       \includegraphics[width=0.28\linewidth]{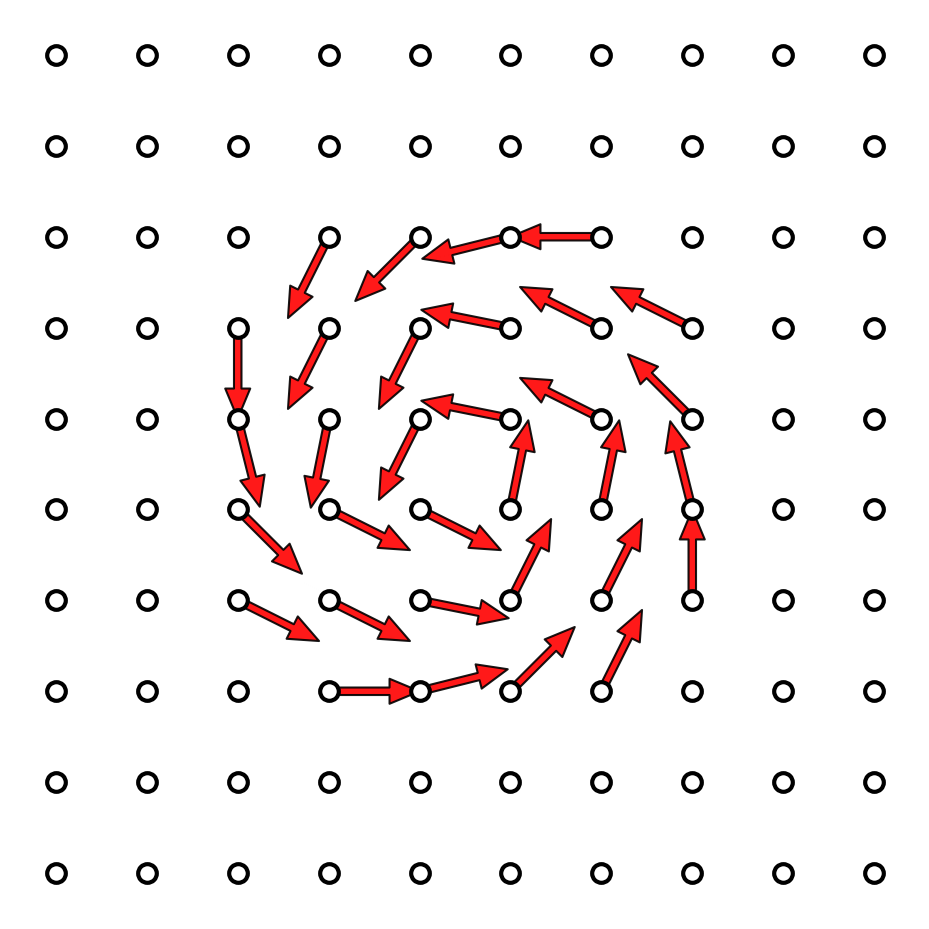}}
    \hspace{1em}
    \subfloat[Vortex Graph\label{fig:flow-field-vortex-graph}]{%
      \includegraphics[width=0.28\linewidth]{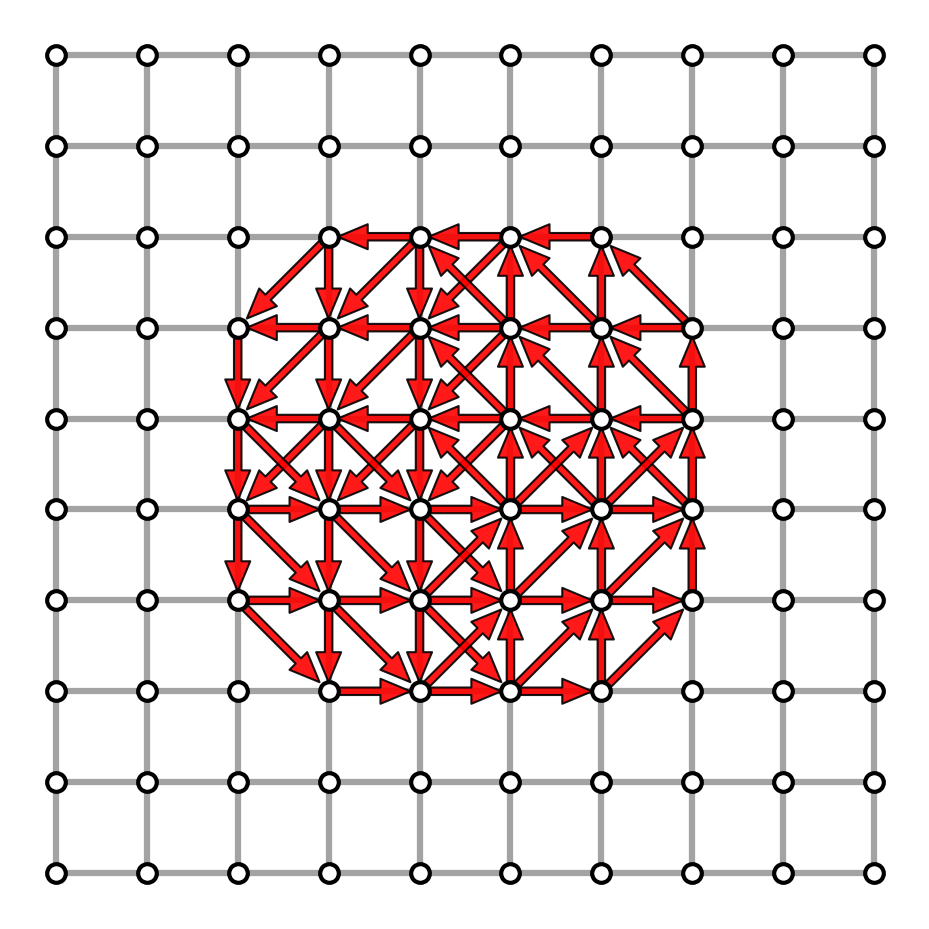}
    }
    \hspace{1em}
    \subfloat[Frequency Definitions \label{fig:eig-freqs}]{%
       \includegraphics[width=0.305\linewidth]{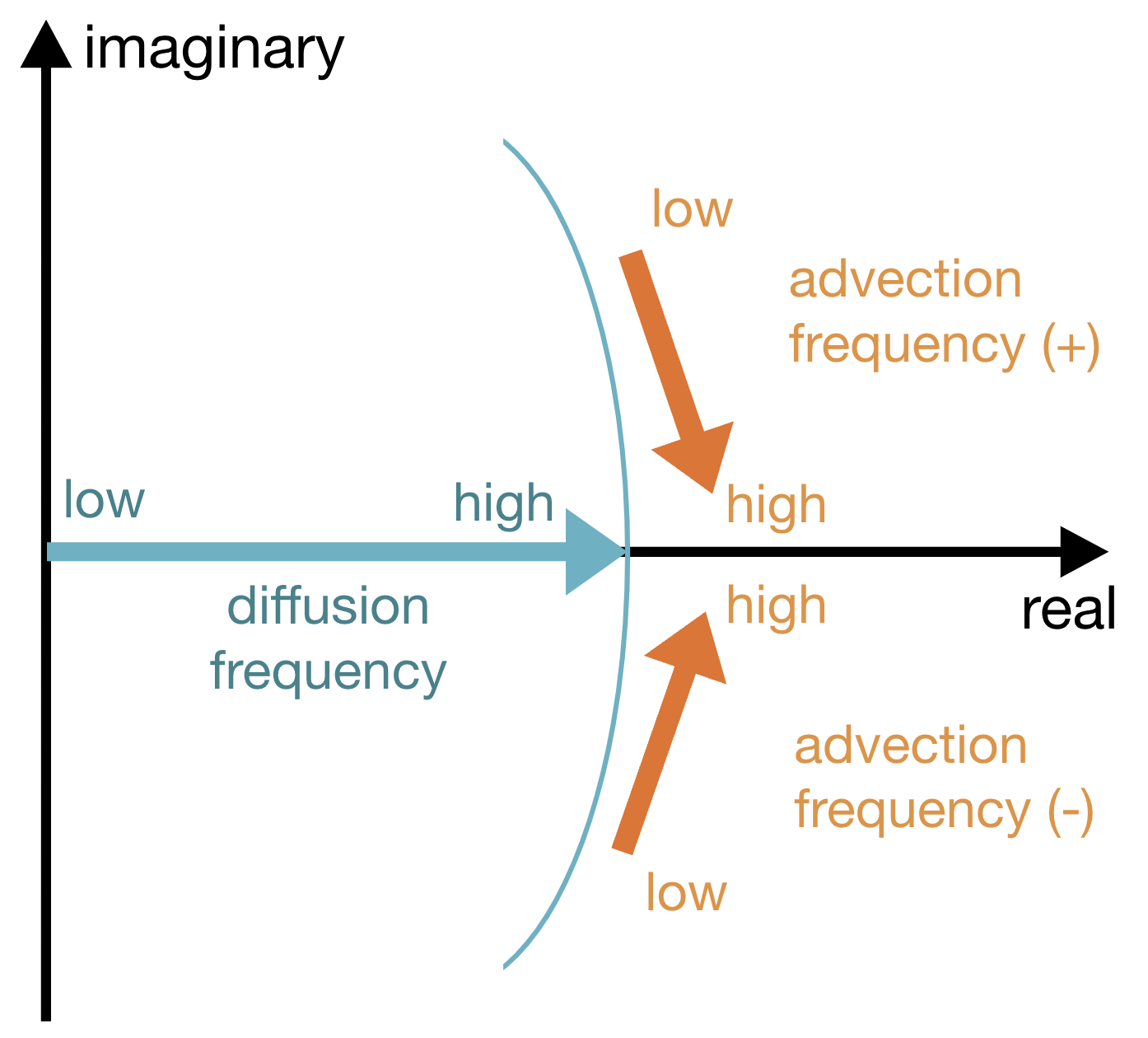}}
  \caption{(a) 1st, 20th, 50th, smoothest eigenvectors selected with due ordering computed from, in order from left to right, modulus, argument, real part, imaginary part of eigenvalues, directed variation (DV) and total variation (TV) of corresponding eigenvectors. (b) Vortex field showing the directionality per node. (c) Vortex graph with directed edges in red and undirected edges in gray. (d) Categorization of frequencies' smoothness, into modulus and argument smooth from corresponding eigenvalue.}
\end{figure*}

\subsection{Graph Sum Filters}
\label{sec:graph-sum-approx}
Given the two new operators $\ma L^{\uparrow}$ and $\ma L^{\circ}$, we propose a new form of $\ma H$ using separate polynomials for each operator. This form can then again be used to approximate a specified spectral window, but by the sum of a polynomial of $\ma \Lambda_R$ and a polynomial in $j\ma \Lambda_I$. Specifically, we define the sum-filter
\begin{equation}
    \tilde{h}_{\text{sum}}(\ma \Lambda_R, j\ma \Lambda_I)
    =
    \left(\sum_{k=0}^{K} \vct \alpha[k] \ma \Lambda_R^k \right)
    +
    \left(\sum_{k=0}^{K} \vct \beta[k] (j\ma \Lambda_I)^k \right),
\end{equation}
where $\ma \Lambda_R$ and $\ma \Lambda_I$ denote the real and imaginary parts of the spectrum, respectively. The corresponding vertex-domain filter is given by
\begin{equation}
    \tilde{h}_{\text{sum}}(\ma L^{\circ}, \ma L^{\uparrow})
    =
    \left(\sum_{k=0}^{K} \vct \alpha[k] (\ma L^{\circ})^k \right)
    +
    \left(\sum_{k=0}^{K} \vct \beta[k] (\ma L^{\uparrow})^k \right),
    \label{eq:graph-sum-filter}
\end{equation}
which explicitly separates the diffusive and advective contributions. To determine the coefficients $\vct \alpha$ and $\vct \beta$, we define
\begin{equation}
    \ma C_K(\ma \Lambda_R, j\ma \Lambda_I)
    =
    \big[\, \ma V_K(\ma \Lambda_R) \;\; \ma V_K(j\ma \Lambda_I) \big]
    \in \mathbb{C}^{N \times 2(K+1)},
\end{equation}
where $\ma V_K(\cdot)$ denotes the Vandermonde matrix of order $K+1$.
The coefficients are obtained via least-squares approximation of the desired spectral filter $\hat{\ma H}_\text{spec}$:
\begin{align}
    \vct \alpha[k]
    &= \left(\ma C_K(\ma \Lambda_R, j\ma \Lambda_I)^{\dagger} \, \operatorname{diag}(\hat{\ma H}_\text{spec})\right)[k], \\
    \vct \beta[k]
    &= \left(\ma C_K(\ma \Lambda_R, j\ma \Lambda_I)^{\dagger} \, \operatorname{diag}(\hat{\ma H}_\text{spec})\right)[k+K+1],
\end{align}
for $k\in[0,K]$. The degrees of the polynomials in $\ma L^{\circ}$ and $\ma L^{\uparrow}$ provide a quantitative measure of the locality of the heat and transport components, respectively. Higher degrees correspond to increasingly nonlocal behavior of the associated process.




\begin{figure*}
    \includegraphics[width=\linewidth]{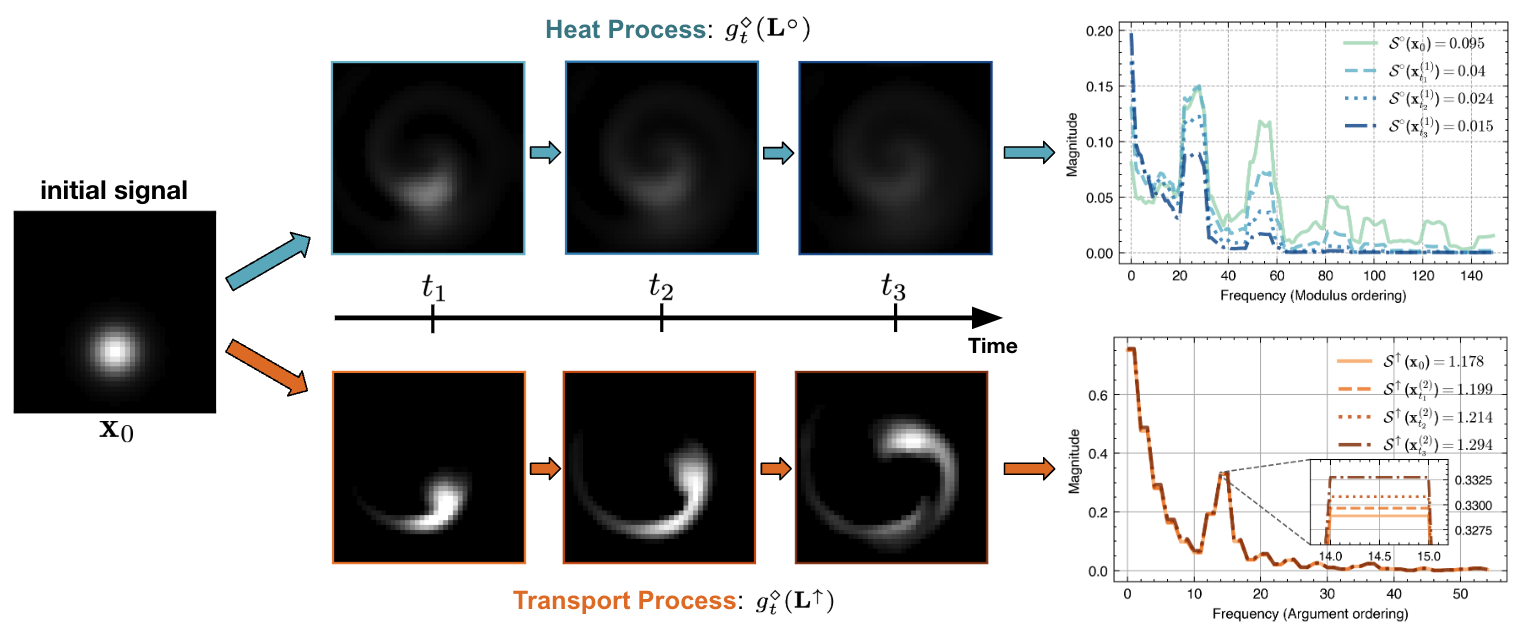}%
    \caption{Graph signals showing $\vc x_0$ (initial signal), diffused (upper row) and transported (lower row) signals at time $t_1, t_2$ and $t_3$. On the rightmost are the graph signals' spectra.}
    \label{fig:vortex-heat-transport-process}
\end{figure*}

\subsection{Graph Rational Filters}\label{sec:graph-rat-approx}
Given our new notion of smoothness based on the argument, we propose to use the associated operator $\ma L^{\uparrowcirc}$ to define a graph rational filter as a polynomial of this operator with the following expressions in the vertex and spectral domains: 
\begin{eqnarray}
    \tilde{h}_{\text{rat}}(\ma L^{\uparrowcirc}) & = &\sum_{k=0}^K\vct \eta[k] \left(\ma L^{\uparrowcirc}\right)^k, \label{eq:graph-rational-filter}\\
    \tilde{h}_{\text{rat}}(j\ma \Lambda_I \ma \Lambda_R^{\dagger}) & = &\sum_{k=0}^K\vct \eta[k] \left(j\ma \Lambda_I \ma \Lambda_R^{\dagger}\right)^k.
\end{eqnarray}
Again, as before when a desired spectral window is to be approximated, we can estimate $\vct \eta[k]$ by 
\begin{equation}
    \vct \eta[k] = \left(\ma V_K(j{\ma \Lambda_I \ma \Lambda_R^{\dagger}})^{\dagger}\, \operatorname{diag}(\hat{\ma H}_\text{spec})\right)[k],\quad k\in[0,K].
\end{equation}


\subsection{Filters Properties} \label{sec:LSI-filters}
\begin{prop} \label{prop:LSI-filters}
    The filters in~Eqs.~\eqref{eq:graph-sum-filter} and \eqref{eq:graph-rational-filter} are linear shift-invariant (LSI); i.e., they commute with $\ma L$. Additionally they commute with $\ma L^\circ$ and $\ma L^\uparrow$.
\end{prop}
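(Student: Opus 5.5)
The plan is to work entirely in the eigenbasis $\ma U$ of $\ma L$, where every operator involved is simultaneously diagonal. We assume $\ma L$ is diagonalizable, as the paper does when it writes $\ma L=\ma U\ma\Lambda\ma U^{-1}$. By Eq.~\eqref{eq:graph-adv-diff} we have
\[
\ma L^{\circ}=\ma U\ma\Lambda_R\ma U^{-1}, \qquad \ma L^{\uparrow}=\ma U(j\ma\Lambda_I)\ma U^{-1},
\]
and by definition $\ma L^{\uparrowcirc}=\ma U(j\ma\Lambda_I\ma\Lambda_R^{\dagger})\ma U^{-1}$. All three therefore have the form $\ma U\ma D\ma U^{-1}$ with $\ma D$ diagonal.

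The first step is to show that powers and linear combinations preserve this form. We have $(\ma U\ma D\ma U^{-1})^k=\ma U\ma D^k\ma U^{-1}$, since the inner factors $\ma U^{-1}\ma U$ cancel telescopically. Consequently
\[
\tilde h_{\text{sum}}(\ma L^{\circ},\ma L^{\uparrow})=\ma U\Big(\sum_k\vct\alpha[k]\ma\Lambda_R^k+\sum_k\vct\beta[k](j\ma\Lambda_I)^k\Big)\ma U^{-1}=\ma U\,\tilde h_{\text{sum}}(\ma\Lambda_R,j\ma\Lambda_I)\,\ma U^{-1},
\]
and likewise $\tilde h_{\text{rat}}(\ma L^{\uparrowcirc})=\ma U\,\tilde h_{\text{rat}}(j\ma\Lambda_I\ma\Lambda_R^{\dagger})\,\ma U^{-1}$. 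Each filter is thus $\ma U\ma D_h\ma U^{-1}$ for a diagonal matrix $\ma D_h$. The spectral pseudoinverse $\ma\Lambda_R^{\dagger}$ is diagonal (entrywise reciprocal on nonzero entries, zero elsewhere), so the rational case needs no extra care.

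The second step is the commutation itself. For any diagonal $\ma D_1,\ma D_2$,
\[
(\ma U\ma D_1\ma U^{-1})(\ma U\ma D_2\ma U^{-1})=\ma U\ma D_1\ma D_2\ma U^{-1}=\ma U\ma D_2\ma D_1\ma U^{-1},
\]
because diagonal matrices commute. Taking $\ma D_2\in\{\ma\Lambda,\ma\Lambda_R,j\ma\Lambda_I\}$ gives commutation of each filter with $\ma L$, $\ma L^{\circ}$ and $\ma L^{\uparrow}$. Commutation with $\ma L$ is exactly the LSI property.

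There is no real obstacle here. The only point to state explicitly is the standing diagonalizability assumption, since without it $\ma L^{\circ}$ and $\ma L^{\uparrow}$ are not even defined by Eq.~\eqref{eq:graph-adv-diff}. A remark that the Jordan-form setting is excluded would close the argument.
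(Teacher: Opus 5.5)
Your proof is correct and takes the same route as the paper. Both arguments rest on the fact that $\ma L$, $\ma L^{\circ}$, $\ma L^{\uparrow}$ and $\ma U j\ma\Lambda_I\ma\Lambda_R^{\dagger}\ma U^{-1}$ are all diagonalized by the same $\ma U$, so each filter has the form $\ma U\ma D_h\ma U^{-1}$ with $\ma D_h$ diagonal and commutes with these operators because diagonal matrices commute; you just write out the telescoping of powers, the diagonality of $\ma\Lambda_R^{\dagger}$, and the diagonalizability assumption, which the paper's one-line argument leaves implicit.
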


\begin{prop} \label{prop:poly-real-coefs}
    Let $\hat{\ma H}\in \mathbb{C}^{N\times N}$ be a spectral filter. If, for every graph signal $\vc x\in\mathbb{R}^N$, the filtered signal~\upshape IGFT$\{\hat{\ma H}\hat{\vc x}\}\in\mathbb{R}^N$ then the coefficients $\vct \alpha, \vct \beta, \vct \eta\in\mathbb{R}^{K+1}$.
\end{prop}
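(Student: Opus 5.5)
Let $\mathbf{h}=\operatorname{diag}(\hat{\ma H})$ denote the vector of spectral responses. The plan is to show that $\mathbf{h}$ is conjugate-symmetric with respect to the eigenvalue pairing, and that the least-squares systems defining $\vct\alpha,\vct\beta,\vct\eta$ are invariant under complex conjugation. Uniqueness of the pseudoinverse solution then forces the coefficients to be real.

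\textbf{Step 1: conjugate symmetry of $\mathbf{h}$.} Define the involution $\pi$ on indices by $\ma\Lambda[\pi(n)]=\ma\Lambda[n]^*$ and $\vc u_{\pi(n)}=\vc u_n^*$. This is available since real eigenvalues carry real eigenvectors and complex ones come in conjugate pairs. Let $\ma P$ be the corresponding permutation matrix, so that $\ma U^*=\ma U\ma P$ and $(\ma U^{-1})^*=\ma P\ma U^{-1}$. If $\ma U\hat{\ma H}\ma U^{-1}\vc x$ is real for every real $\vc x$, then $\ma U\hat{\ma H}\ma U^{-1}$ is a real matrix. Equating it with its conjugate $\ma U\ma P\hat{\ma H}^*\ma P\ma U^{-1}$ gives $\hat{\ma H}=\ma P\hat{\ma H}^*\ma P$, that is, $\mathbf{h}[\pi(n)]=\mathbf{h}[n]^*$. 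One caveat: if repeated eigenvalues make the pairing ambiguous, I would use the natural pairing induced by the chosen basis, and the argument is unchanged.

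\textbf{Step 2: the design matrices satisfy the same symmetry.} The diagonals of $\ma\Lambda_R$ are invariant under $\pi$. The diagonals of $j\ma\Lambda_I$ satisfy $j\ma\Lambda_I[\pi(n)]=-j\ma\Lambda_I[n]=(j\ma\Lambda_I[n])^*$. The diagonals of $j\ma\Lambda_I\ma\Lambda_R^\dagger$ likewise map to their conjugates, since $\ma\Lambda_R^\dagger$ is real and $\pi$-invariant. It follows that each design matrix $\ma M\in\{\ma V_K(\ma\Lambda),\ \ma C_K(\ma\Lambda_R,j\ma\Lambda_I),\ \ma V_K(j\ma\Lambda_I\ma\Lambda_R^\dagger)\}$ satisfies $\ma M^*=\ma P\ma M$: conjugating entries is the same as permuting rows.

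\textbf{Step 3: the minimum-norm solution is real.} Let $\vct\gamma=\ma M^\dagger\mathbf{h}$. Using $(\ma P\ma M)^\dagger=\ma M^\dagger\ma P^T$, which holds because $\ma P$ is unitary, we compute
\begin{equation*}
\vct\gamma^*=(\ma M^*)^\dagger\mathbf{h}^*=(\ma P\ma M)^\dagger\ma P\mathbf{h}=\ma M^\dagger\ma P^T\ma P\mathbf{h}=\ma M^\dagger\mathbf{h}=\vct\gamma .
\end{equation*}
Hence $\vct\gamma$ is real. For $\ma M=\ma C_K$ this yields real $\vct\alpha$ and $\vct\beta$ simultaneously, and for $\ma V_K(j\ma\Lambda_I\ma\Lambda_R^\dagger)$ it yields real $\vct\eta$.

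\textbf{Main obstacle.} The delicate point is the rank deficiency of $\ma C_K$. Its first columns are shared: both Vandermonde blocks contain the all-ones column. As a result, the least-squares solution is not unique, and realness only holds for the specific Moore--Penrose choice. The identity $(\ma P\ma M)^\dagger=\ma M^\dagger\ma P^T$ is exactly what makes the argument work in that case. I would also briefly check that Step 1 does not require injectivity of the spectrum beyond the pairing convention.
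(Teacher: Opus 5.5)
Your proposal is correct and follows essentially the same route as the paper. The paper pairs conjugate indices $(m,n)$ in $\vct\eta[k]=\sum_m \ma V[k,m]\hat{\ma H}_{\text{spec}}[m,m]$ and uses $\ma V[k,n]\hat{\ma H}_{\text{spec}}[n,n]=(\ma V[k,m]\hat{\ma H}_{\text{spec}}[m,m])^*$ to obtain real sums, and likewise with $\ma C_K^\dagger$. Your permutation-matrix version proves the two facts the paper uses without justification: the symmetry $\hat{\ma H}=\ma P\hat{\ma H}^*\ma P$, derived from the realness hypothesis, and the identity $(\ma M^\dagger)^*=\ma M^\dagger\ma P^T$, which shows the pseudoinverse respects the conjugate pairing and also covers the rank-deficient $\ma C_K$.
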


For real-valued graph signals, this property restricts  the admissible filter coefficients to real values.


\section{Experimental Results}

\subsection{Vortex Graph}

We first illustrate the proposed frequency ordering using a vortex flow sampled on a two-dimensional grid, as shown in  Fig.~\ref{fig:flow-field-vortex}. Specifically, we consider a graph composed of $N$ nodes uniformly distributed on a square grid, where each node is characterized by its radial distance $r_n$ and angle $\theta_n$ from the grid center. The underlying support is constructed by connecting each node to its four nearest neighbors through undirected edges of unit weight, yielding a regular grid structure. To encode a vortex flow pattern, we introduce directed edges between node pairs whose radial distance satisfy $|r_n-r_m|<d$, where $d$ is one grid spacing. This is to account for the square-grid sampling, which does not permit exact circular alignment of nodes around the center. The orientation of these edges follows a counter-clockwise direction with respect to the grid center. In practice, an edge from node $n$ to node $m$ is assigned when the angular difference $\theta_m-\theta_n$ is negative. All added edge weights are set to 1. An illustrative example with $N=100$ nodes is shown in Fig.~\ref{fig:flow-field-vortex-graph}. In the subsequent analysis, we consider an upsampled version with $N=50\times 50=2500$ nodes.

\subsection{Frequency Orderings}
We compare the proposed modulus and argument orderings against other proposals from the literature, in particular, directed variation (DV)~\cite{sardellitti_graph_2017, shafipour_directed_2019} and total variation (TV)~\cite{sandryhaila_discrete_2014}:
\begin{eqnarray}
    \text{DV}(\vc x) & = & \sum_{n,m=1}^N \ma A[m,n]\, \max(\vc x[n] -\vc x[m], 0), \\
    \text{TV}(\vc x) & = & \|\vc x-\ma A\vc x\|_1.
\end{eqnarray}
In Fig.~\ref{fig:ordering-frequencies}, we illustrate the effect on the previously introduced vortex graph when ordering eigenvectors of the directed Laplacian according to increasing values of these smoothness measures. Complemented by the ordering based on real and imaginary part, modulus, and argument of the eigenvalues, we observe distinct behavior. Eigenvectors that are smooth according to modulus ordering exhibit slow oscillations in the undirected graph parts where they display approximately isotropic behavior. In contrast, eigenvectors that are smooth according to argument ordering show slow oscillatory patterns aligned with the flow field. Other orderings tend to mix these two features that we relate to the notions of diffusion and advection, respectively, essentially obscuring the contributions of the flow-aligned patterns that emerge from argument ordering.

\begin{figure*}[hbt!]
    \centering
    \captionsetup[subfigure]{justification=centering}
    \subfloat[Wind Field\label{fig:wind-field-temperature}]{%
       \includegraphics[width=0.25\linewidth]{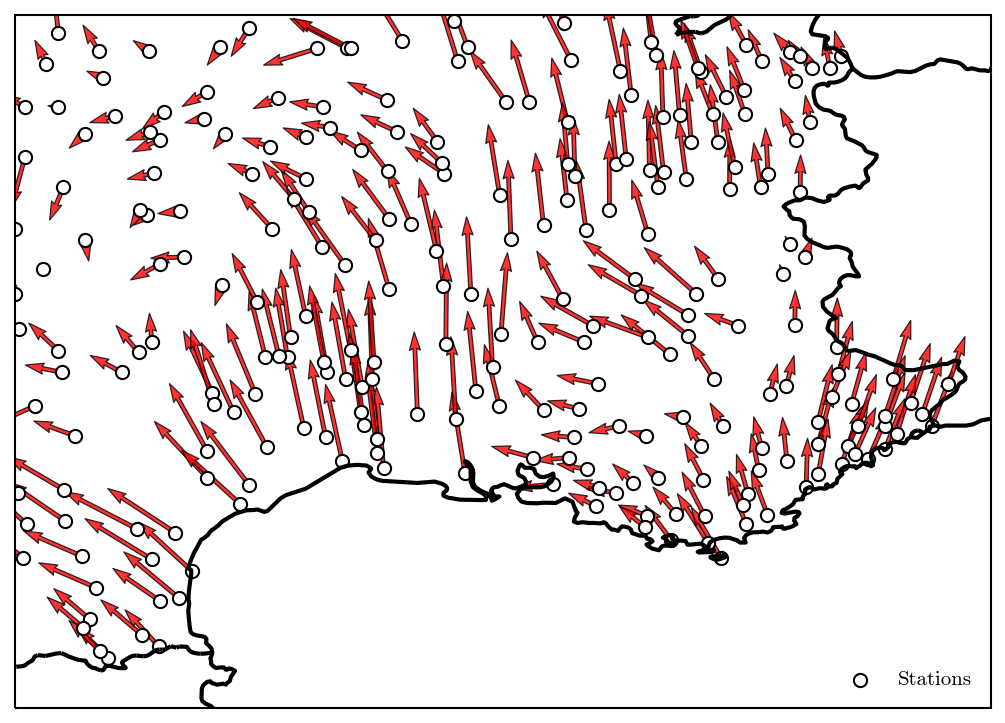}}
    \subfloat[Temperature Graph \label{fig:distance-based-temperature}]{%
       \includegraphics[width=0.25\linewidth]{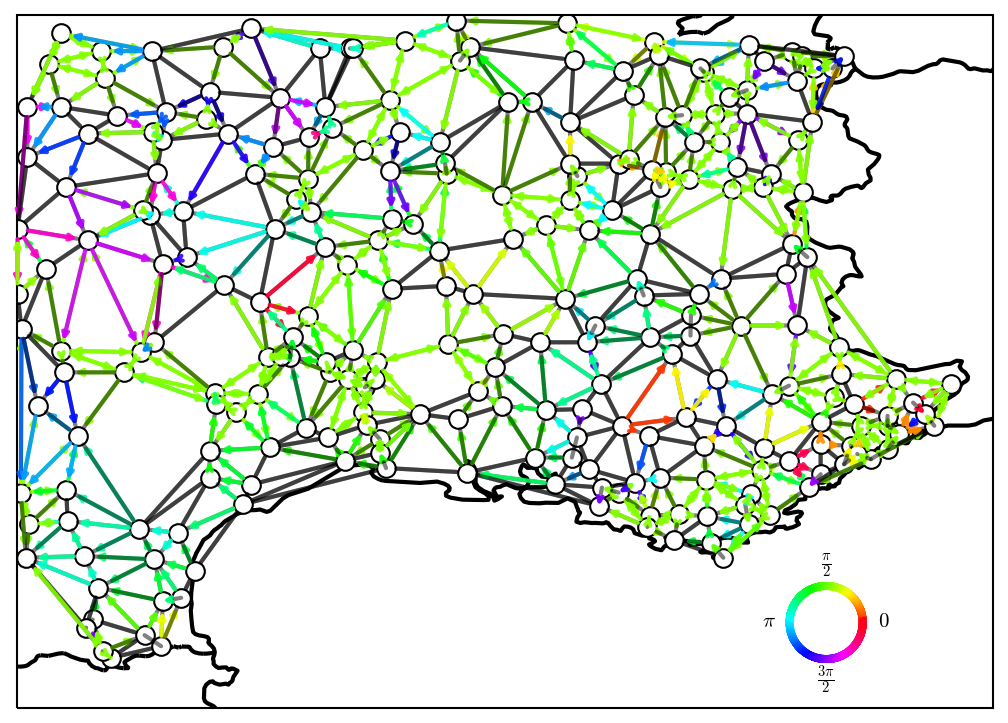}}
       \hfill
    \subfloat[Lowpass Diffusion \label{fig:graph-filterlowpass-D}]{%
       \includegraphics[width=0.5\linewidth]{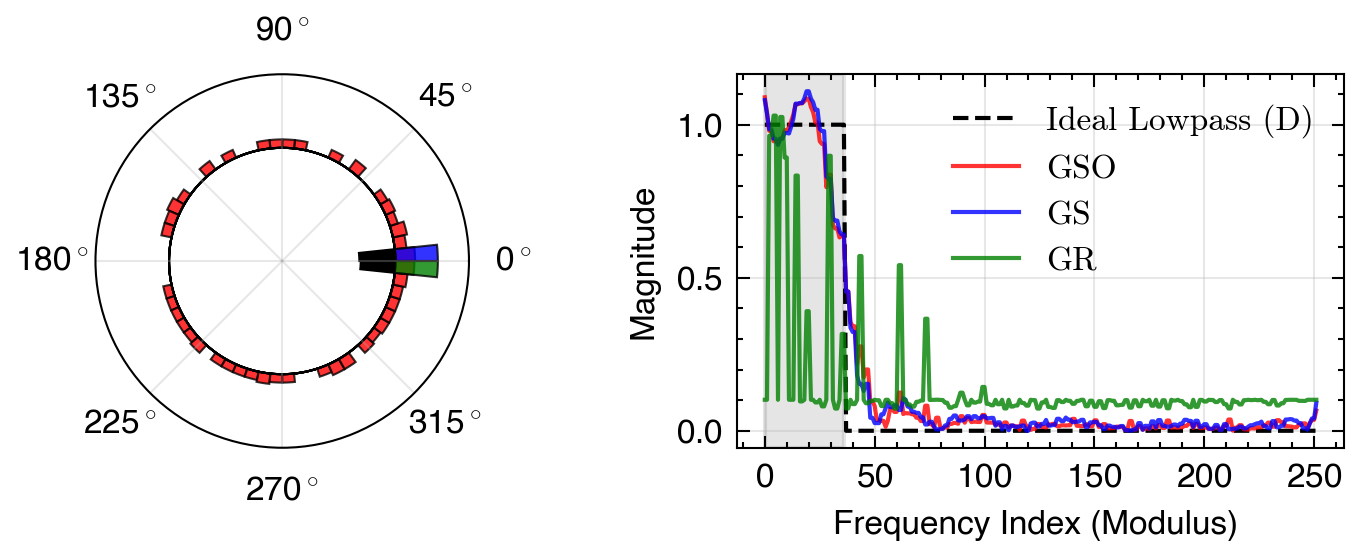}}
       \hfill\\
    \subfloat[Lowpass Advection \label{fig:graph-filterlowpass-A}]{
       \includegraphics[width=0.5\linewidth]{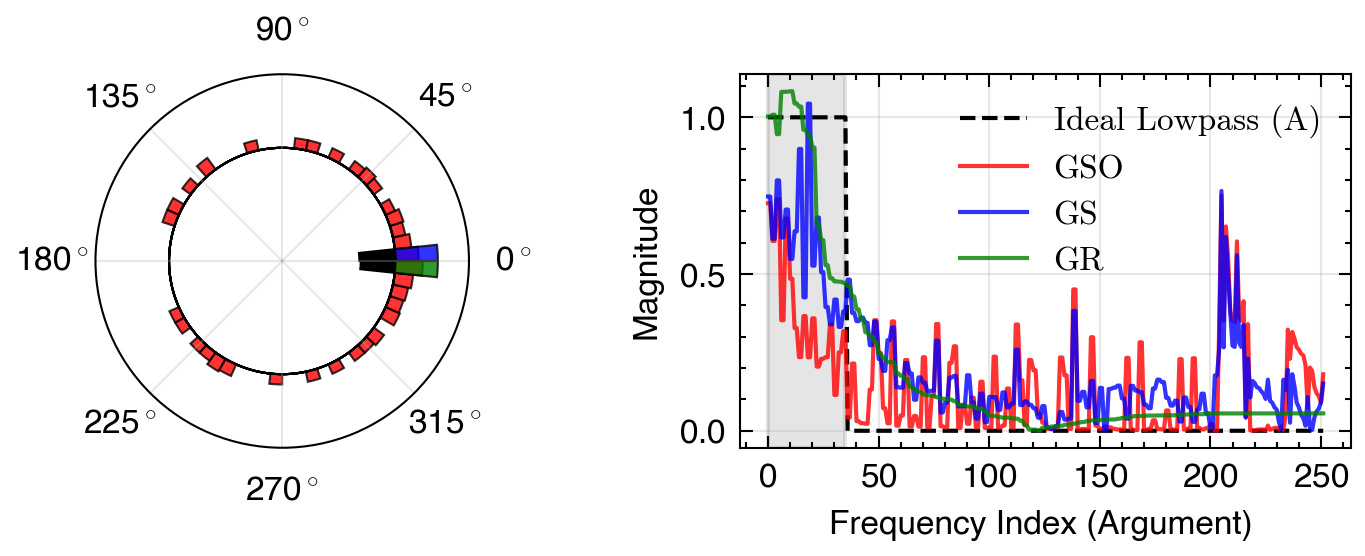}}
    \subfloat[Linear Phase Shift \label{fig:graph-phasefilter}]{%
       \includegraphics[width=0.5\linewidth]{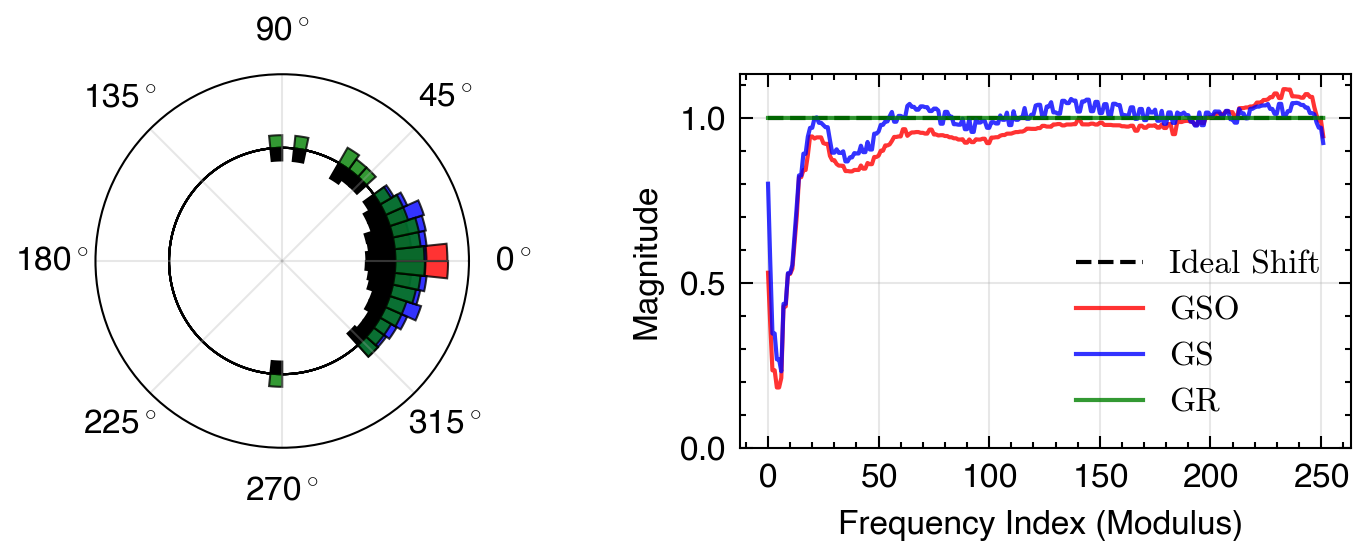}}
  \caption{(a) Wind field averaged over the period of year 2018.  (b) Temperature graph with directed edges colored according to orientation, and with undirected edges in black. (c) Spectral response (coefficients' angle and amplitude) of ideal modulus low pass, the graph sum (GS) filter, the graph rational (GR) filter and the GSO filter approximation of order 10. (d) Spectral response of ideal argument low pass, the GS filter, the GR filter and the GSO filter approximation of order 10. (e) Spectral response of ideal phase shift, the GS filter, the GR filter and the GSO filter approximation of order 10.} 
\end{figure*}

\subsection{Heat and Transport Kernels}

We illustrate the behavior of the graph heat and transport kernels when applied to spatially localized signals on the vortex graph. Specifically, we consider an initial graph signal (Fig.~\ref{fig:vortex-heat-transport-process}) defined as a Gaussian profile over the node coordinates:
\begin{equation*}
    \vc x_0[n] = e^{-\sigma \|\vc p[n] - \vc c\|^2},
\end{equation*}
where $\vc p[n]\in \mathbb{R}^2$ denotes the spatial coordinates of node $n$. We choose $\vc c$ the center of the Gaussian profile to be located within the lower part of the vortex, and $\sigma=0.1$ that controls the spatial spread of the signal.

Starting from $\vc x_0$, we can generate diffused signals at any time $t$ with the heat kernel $g_t^{\diamond}(\ma L^{\circ})$ as
\begin{eqnarray*}
    \vc x_t^{(1)} &=& g_t^{\diamond}(\ma L^{\circ})\vc x_0.
\end{eqnarray*}
Similarly, using the transport kernel $g^{\diamond}(\ma L^{\uparrow})$, we obtain
\begin{eqnarray*}
    \vc x_t^{(2)} &=& g_t^{\diamond}(\ma L^{\uparrow})\vc x_0.
\end{eqnarray*}
We generate signals for $t=100, 200, 300$ with $\alpha=0.005$ as shown in Fig.~\ref{fig:vortex-heat-transport-process}. 
As expected, the heat kernel progressively diffuses $\vc x_0$ isotropically over the graph, resulting in increasing smoothing. In contrast, the transport kernel propagates the signal along the underlying flow field while largely preserving its concentration.

We further examine the spectral behavior of these signals. Fig.~\ref{fig:vortex-heat-transport-process} shows the spectra of all the signals considered so far. As expected for diffusion phenomenon, the application of the heat kernel has a lowpass characteristic according to modulus ordering of frequencies. Conversely, the transport kernel primarily impacts the spectrum according to the argument ordering, reflecting phase modulation rather than attenuation. Indeed, we observe the nearly identical magnitude response across the signals filtered by the transport kernel.

\begin{figure*}[hbt!]
    \centering
    \captionsetup[subfigure]{justification=centering}
    \subfloat[Temperature regression performance\label{fig:temperature-regression-performance}]{%
       \includegraphics[width=0.46\linewidth]{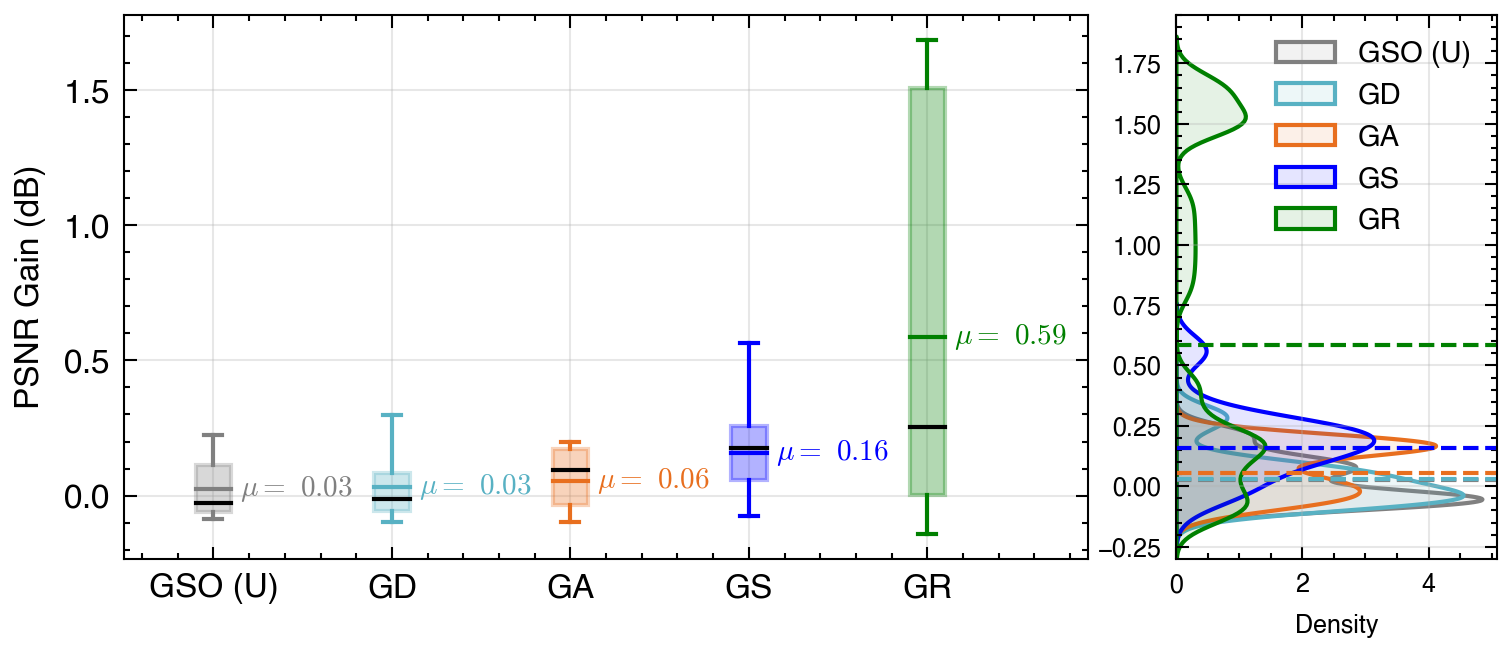}}
    \subfloat[Performance GS \label{fig:prediction-gso}]{
      \raisebox{13.6pt}{
        {%
      \includegraphics[width=0.239\linewidth]{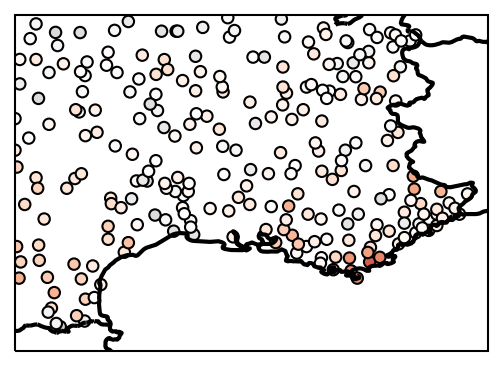}%
    }%
    }
    }
    \subfloat[Performance GR \label{fig:prediction-gr}]{
      \raisebox{13.6pt}{
      {%
      \includegraphics[width=0.300\linewidth]{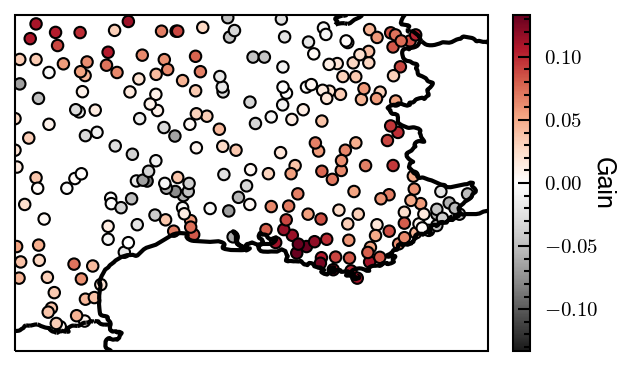}%
    }%
    }
    }
  \caption{(a) Performance gain of the undirected GSO, GD, GA, GS, and GR filters for temperature regression w.r.t directed GSO.  (b) Gain of GS-based temperature regression w.r.t directed GSO. (c) Gain of GR-based temperature regression w.r.t directed GSO.} 
\end{figure*}

\subsection{Case Study: Temperature Sensor Network}
\label{sec:graph-filters-approximation}
We construct a graph from a temperature sensor network that covers Côte d'Azur in the South-West of France~\cite{larvor2020meteonet}. The weather cast data is collected along the full year of 2018 and includes $N=252$ stations of interest. These stations constitute the nodes of the graph and we consider undirected edges according to the geographical mesh of the stations; i.e., the black edges Fig.~\ref{fig:distance-based-temperature}. The mesh is built via Delaunay triangulation, after which edges connecting distant stations (more than $1^\circ$ combined longitude-latitude distance) are removed. The undirected edge weights are set to 1. Then, leveraging wind directions information from the dataset (Fig.~\ref{fig:wind-field-temperature}), we build the set of directed edges. For each node, a directed edge to a neighboring node (according to the mesh) is added when the direction of the corresponding mesh edge is sufficiently aligned with the node's average wind direction; i.e., when their absolute angle difference is less than $\leq \pi/3$. These directed edges, shown in color in Fig.~\ref{fig:distance-based-temperature}, are set to unit weight. Finally, the adjacency matrix of the final graph is constructed as a convex combination of the undirected (mesh) and directed (wind) adjacency matrices. To guarantee diagonalizability, a small perturbation is applied to the resulting matrix, following \cite{seifert_digraph_2021}, which effectively adds two edges to the original set of 643 edges. In our experiments, the convex combination parameter is set to $0.6$, giving a slightly stronger weight to the wind direction. 
The resulting graph is shown in Fig.~\ref{fig:distance-based-temperature}, which is directed and weighted. We provide the first few eigenvectors according to our two proposed orderings in Supplementary.

We now want to demonstrate the feasibility of different graph filters. Specifically, we compare different filter designs using the sum filter (GS), the rational filter (GR), and the GSO-based polynomial filter. The approximation is performed by solving the least-squares problems described in Section~\ref{sec:filters}. The order of the polynomials is set to $K=10$ (results for order $K=3$ and $K=5$ can be found in supplementary). We evaluate the approximation quality by comparing the spectral response of the approximated filters with the ideal ones, both in terms of magnitude and phase response.
\subsubsection{Lowpass in the Diffusive Sense}
We use the ideal lowpass filter $\hat{\ma H}_\text{low}^\circ$ with $T$ such that \text{card}$(\mathcal{L}^\circ)=\floor{N/7}$. From Fig.~\ref{fig:graph-filterlowpass-D}, we see a better approximation from the GS compared to the GSO-based polynomial filter and the poorer approximating GR. However, the phase response is perturbed for the GSO filter, while GS and GR preserve ideal phase response, which are not introducing any phase shift.

\subsubsection{Lowpass in the Advective Sense}
We use the ideal lowpass filter $\hat{\ma H}_\text{low}^\uparrow$ with $T$ such that \text{card}$(\mathcal{L}^\uparrow)=\floor{N/7}$ and is an odd number. In Fig.~\ref{fig:graph-filterlowpass-A}, we observe that the GR filter yields the best approximation compared to GS and the GSO filter. Moreover, the phase response is perfectly preserved by GS and GR (no phase shift), while the GSO filter exhibits phase-shifting behavior.

\subsubsection{Linear Phase Shift}
We use the ideal linear phase shift filter $\hat{\ma H}_\text{shift}^{(q)}$ with $q=6$. This is chosen so that the smoothest spatial mode undergoes a phase shift of approximately half a period, corresponding to a shift across half of the map. Equivalently, for the smoothest (argument-wise) non-zero eigenvalue, $q\angle(\ma \Lambda[1])\approx \pi$. 
As shown in Fig.~\ref{fig:graph-phasefilter}, the GR approximation attains an almost perfect match to the ideal response both in magnitude and phase. The GS approximation preserves the phase response well while providing a more stable magnitude than a straightforward GSO polynomial approximation, which exhibits noticeable phase distortion and amplitude deviations. We further perform approximation of ideal filters for random phase filters, and graph Hilbert filter, as shown in Supplementary.

\subsection{Case Study: Temperature Regression}
\label{sec:temperature-regression}
Using now the temperature data of the Côte d'Azur weather stations over the period of year 2018 \cite{larvor2020meteonet}, we denote $\vc X$ the full data matrix of size $400 000 \times 252$, centered in time, and indexed as $\vc x_t$. In particular we select the timepoints for which the wind directions embedded in the graph are representative. Practically, we compute the circular correlation \cite{fisher1995statistical} across the stations, between the average wind directions (from which we constructed the graph) and the wind direction at each timepoint. The subset $S$ contains all timepoints with higher correlation than the $99.5$-th percentile leaving us with $\text{card}(S)=2000$ timepoints.
We aim to characterize the temporal process undergone by $\vc x_t$ through the graph in Fig.~\ref{fig:distance-based-temperature}. To this end, we consider pairs of subsequent time steps of temperature measurements, $\vc x_t$ and $\vc x_{t+d}$ with $t\in S$ and provide an estimate of $\vc x_{t+d}$ from $\vc x_t$ through a chosen graph filter $h$:
\begin{equation*}
    {\vc y}_{t+d} = h(\ma L) \vc x_t.
\end{equation*}

We consider the sum (GS), rational (GR), heat (GD), transport (GA), directed graph shift operator (GSO), and undirected graph shift operator (GSO U) filters.
To determine the coefficients of each filter, we consider the average normalized mean square error (NMSE) over all time steps:
\begin{equation*}
    \text{NMSE} := \frac{1}{\text{card}(S)}\sum_{t\in S} \frac{||\vc y_{t+d} - \vc x_{t+d}||_2^2}{||\vc x_{t+d}||_2^2}.
\end{equation*}
In this experiment, we set $d = 40$ timepoints (equivalent to 20 min), allowing the temperature to sufficiently vary. If $d$ is small, the obtained graph filter $h$ amounts to the identity filter. Additionally, the order of all filters, except for heat and transport kernels that are order $1$, are set to $4$.

In Fig.~\ref{fig:temperature-regression-performance}, we show the prediction performance of the different filters in terms of Peak Signal to Noise Ratio (PSNR) gain with respect to the directed GSO. Firstly the results show that the rational graph filter GR outperforms all other graph filters, demonstrating its effectiveness in capturing the underlying temperature evolution through an average PSNR gain of 0.59 --- as a reference the average PSNR of the directed GSO is 17. Notably, the transport kernel and sum filter also allow an average PSNR gain of 0.06 and 0.16 respectively, compared to the directed GSO, suggesting that the advection process controlled by both graph filters play a significant role in the temperature dynamics. Surprisingly, the undirected GSO also seem to marginally improve prediction, indicating that the integration of directional information through the directed GSO may actually be detrimental and may hint at the directionality information not being properly exploited by the directed GSO. Lastly, we show a sample of the performance gain with respect to the GSO filter for the sum filter and the rational graph filter in Figs.~\ref{fig:prediction-gso} and~\ref{fig:prediction-gr}, respectively. For the sum filter, we observe mainly improvements on the south coast, while for the rational filter, we see a more spread and stronger improvements.

\subsection{Implementation}
Documented code is available on this Github Repository 
\footnote{\url{https://github.com/MIPLabCH/GraphDiffusionAdvection}}. The implementation is provided in Python and includes all algorithms and figures illustrated in this paper. 

\section{Discussion \& Outlook}
In this work, we took inspiration from the diffusion-advection operator from physics and its continuous-domain eigendecomposition to reconsider the directed graph Laplacian as an operator that integrates these two concepts. We derived how this naturally introduces new orderings of the complex-valued eigenvalues based on their modulus and argument. Subsequently, this extends to the specification of different smoothness measures; i.e., either over the undirected part of the graph, or along the directed part. This differs from previous work \cite{sandryhaila_discrete_2014, sardellitti_graph_2017, shafipour_directed_2019}, where the proposed smoothness measures do not distinguish between these two possibilities. 
Furthermore, we introduce graph filters corresponding to heat and transport kernels, and also two new classes of graph filters, the sum and rational graph filters. As for conventional GSO-based graph filters, they  can be optimized to approximate ideal filters for a given order. We show through experiments that the proposed graph filters surpass simple GSO polynomial in approximating ideal filters. Finally, we apply the sum and rational graph filters to a temperature regression task on a wind-derived graph and demonstrate that they outperform simple GSO polynomial filter, with the rational graph filter providing the best regression results. In this experiment, we highlight that directionality necessary to capture signal dynamics is not fully exploited by the GSO filter, even if it is defined on the same directed graph, and instead requires the flexibility offered by the newly introduced diffusion and advection operators.


Our contribution also lies in connecting directed GSP with a partial differential equation, namely the diffusion-advection equation,in the same way that undirected GSP is guided by its link with the diffusion equation. This establishes a key conceptual distinction from approaches that symmetrize \cite{brefeld_graph_2020, wei_hermitian_2024} and thus implicitly consider diffusion only. Moreover, in several instances -- notably in filter design -- we show that the complex-valued eigendecomposition is not a technical obstacle, but rather a meaningful and exploitable feature. This differentiates our work from the ones that impose real-valued bases \cite{sardellitti_graph_2017, shafipour_directed_2019, chen_graph_2023}. 

Our work can also be related to the polar decomposition. Indeed, similarities may be drawn with \cite{ji2026digraph} where the eigenvalues of the two factorized matrices lead to complex exponential and real-valued eigenvalues interpreted as rotation and scaling, respectively. In contrast, our diffusion ($\ma L^{\circ}$) and advection operators ($\ma L^{\uparrow}$) draw interpretation from a physical process and recombine by addition into the original GSO, rather than by multiplication. We also documented a set of desirable properties useful for the construction of graph filters. Rather than positioning these perspectives in opposition, our framework can provide complementary physical insight into existing GSO decomposition in directed GSP, such as the SVD-based GFT \cite{chen_graph_2023}, and the polar decomposition-based GFT \cite{kwak_frequency_2024, ji2026digraph} as both rely on the eigendecomposition of $\ma L\ma L^T$ and $\ma L^T\ma L$, which in turn can be seen as a consecutive application of the diffusion-advection operator in one direction and its opposite. 


Importantly, we do not discretize the continuous diffusion-advection operator, as it is known that the operator does not admit an immediate extension to graphs. Existing approaches that substitute differential operators by graph counterparts \cite{chamberlain2021beltrami} remain limited by concerns of satisfying the Leibniz rule and a formal definition of vector fields on graphs. Instead, we derive $\ma L^\circ$ and $\ma L^\uparrow$ from the spectral structure of the continuous diffusion-advection operator, seeking to preserve the effective dynamical properties associated with diffusion and advection rather than discretizing the differential operators themselves. In the continuous setting, the real part of the spectrum governs modal dissipation through exponential decay, while the imaginary part governs modal phase evolution associated with transport. Correspondingly, \(\ma L^\circ\) isolates the spectral contribution responsible for attenuation and smoothing effects, whereas \(\ma L^\uparrow\) isolates the contribution responsible for directional transport and oscillatory behavior. We further strengthen the analogy through the case where $\ma L$ is a normal operator, for which $\ma L^\circ$ is symmetric and $\ma L^\uparrow$ is skew-symmetric, recovering established definitions of graph Laplacian \cite{ortega_graph_2018} and graph advection \cite{chapman2015advection}, respectively. We emphasize, however, that these constructions provide either a graph diffusion or graph advection directly from the graph adjacency matrix, whereas our framework extracts both components from the directed graph Laplacian. Furthermore, we show that $\ma L^{\uparrow}$ satisfy divergence-free property, consistent, with the interpretation of advection as a transport induced by an incompressible velocity field. Beyond this theoretical consistency, we provide empirical evidence demonstrating that $\ma L^\circ$ captures symmetric (diffusive) behavior, whereas $\ma L^\uparrow$ captures directional (advective) behavior. Taken together, these results support the proposed decomposition as a meaningful and structurally grounded analogue of the diffusion-advection dynamics on graphs.

Distinctively from the GSO-based filter, the sum and rational filters require a spectral decomposition of the underlying graph Laplacian, which can be computationally expensive for large graphs. However, in practice, these filters can be approximated using Lanczos solver \cite{lanczos1950iteration} that compute only a subset of the eigenvalues and eigenvectors, particularly those corresponding to the smallest real parts and largest imaginary parts of the spectrum, which are most relevant for capturing the graph diffusion and advection operators.


The new insights provided by this work can spur further advances in directed GSP. First, the ordering and smoothness definitions can be readily used in regularization terms for denoising \cite{chen_signal_2015, pang_graph_2017}, learning of graphs structure \cite{dong_learning_2016, kalofolias2016learn} and signal separation \cite{mohammadi_graph_2023, yarandi_new_2024}. 
Second, the presented graph filters could serve as building blocks for GNNs~\cite{bronstein2017geometric}. While works on GNNs applying the concept of diffusion-advection exist, they are not taking the same perspective as in this work. Indeed, works from \cite{eliasof2024feature, wu2025supercharging} employ the term advection either to refer to learnable directionality parameters or to message passing mechanism using the adjacency matrix. Other work \cite{berlureau2026advection} suggests to change dynamics of the diffusion by biasing direction through nodal potential, this nodal potential is learned and applied to the symmetric graph Laplacian, preserving symmetry of the GSO. Thus, the dynamics described by the modulated operator more closely resemble that of anisotropic diffusion, or directionality preferential, diffusion. Importantly, the proposed graph filters in our work are shown to translate a variety of operations in the graph spectral domain, such as argument and modulus low, high pass filtering, phase shifting to the nodal domain. These operations, could be leveraged to improve features propagation on the graph, taking a different stance in tackling oversquashing and oversmoothing \cite{qureshi2023limits}, which are important concerns in the field of GNNs.



\appendices

\section{Properties of the New Operators \texorpdfstring{$\ma L^{\circ}$ and $\ma L^{\uparrow}$}{L-circ and L-up}}
\label{app:prop}

\subsection{Proof of Proposition~\ref{prop:real}}

\begin{proof}
We recall that the columns $\ma U^{-1}$, denoted $\vc v_n$ are also left eigenvectors of $\ma L$, therefore $\vc v_n^T \ma L  = \ma \Lambda[n] \vc v_n^T$, and as such there exist a conjugate index $n^*$ such that $\vc v_{n^*} = \vc v_n^*$ and notably we have $\vc u_{n^*} = \vc u_n^*$ for the same pair $(n,n^*)$. It is then straightforward to see that the spectral projectors $\vc u_n \vc v_n^T$ also admits a conjugate pair at index $n^*$. By viewing $\ma L^{\circ}$ and $\ma L^{\uparrow}$ using its spectral projectors, we have
\begin{equation*}
    \ma L^{\circ} = \sum_{n} \ma \Lambda_R[n] \vc u_n \vc v_n^T, \quad \ma L^{\uparrow} = \sum_{n} j\ma \Lambda_I[n] \vc u_n \vc v_n^T.
\end{equation*}
Denoting $\mathcal{R}$ the indices for real eigenvalues and $\mathcal{I}$ the indices for complex eigenvalues, we can rewrite the above expressions as
\begin{align*}
    \ma L^{\circ} &= \sum_{n\in\mathcal{R}} \ma \Lambda_R[n] \vc u_n \vc v_n^T + \frac{1}{2}\sum_{n\in\mathcal{I}} \ma \Lambda_R[n] \vc u_n \vc v_n^T + \ma \Lambda_R[n^*] \vc u_{n^*} \vc v_{n^*}^T\\
    &=\sum_{n\in\mathcal{R}} \ma \Lambda_R[n] \vc u_n \vc v_n^T + \sum_{n\in\mathcal{I}} \ma \Lambda_R[n] \Re\left(\vc u_n \vc v_n^T\right) \in \mathbb{R}^{N\times N},
\end{align*}
and 
\begin{align*}
    \ma L^{\uparrow} &= \sum_{n\in\mathcal{I}} j\ma \Lambda_I[n] \vc u_n \vc v_n^T + j\ma \Lambda_I[n^*] \vc u_{n^*} \vc v_{n^*}^T \\
    &= \frac{1}{2}\sum_{n\in\mathcal{I}} j\ma \Lambda_I[n] (\vc u_n \vc v_n^T - \vc u_{n^*} \vc v_{n^*}^T) \\
    &= -\sum_{n\in\mathcal{I}} \ma \Lambda_I[n] \Im\left(\vc u_n \vc v_n^T\right) \in \mathbb{R}^{N\times N},
\end{align*}
which concludes the proof.
\end{proof}

\subsection{Proof of Proposition \ref{prop:zero-row-sum}}
\begin{proof}
This is a direct consequence of the construction of $\ma L^{\circ}$ and $\ma L^{\uparrow}$ since both have the same eigenvectors as $\ma L$ and, in particular, the eigenvalue $0$ associated to the constant eigenvector $\vc 1$.
\end{proof}

\subsection{Proof of Proposition \ref{prop:eigenvalues}}
\begin{proof}
The eigenvalues of the directed Laplacian $\ma \Lambda[n]$ have non-negative real parts; i.e.,  a graph with a single connected component only has one zero eigenvalue and the remaining eigenvalues are positive \cite{singh_graph_2016}. Therefore, the eigenvalues of $\ma L^{\circ}$ are real and also non-negative, leading to a rank of $N-1$. On the other hand, the eigenvalues of $\ma L^{\uparrow}$ are purely imaginary by construction, which also implies a rank of at most $N-1$.
\end{proof}

\subsection{Proof of Proposition \ref{prop:trace}}
\begin{proof}
The trace of $\ma L^{\circ}$ and $\ma L^{\uparrow}$ can be computed using their eigenvalues as
\begin{equation*}
    \Tr(\ma L^{\circ}) = \sum_{n=1}^N \ma \Lambda_R[n], \quad \Tr(\ma L^{\uparrow}) = \sum_{n=1}^N j\ma \Lambda_I[n].
\end{equation*}
Recall that the eigenvalues of $\ma L$ have non-negative real parts, therefore $\Tr(\ma L^{\circ}) > 0$. Additionally, since the eigenvalues are either real or arranged in conjugate pairs, we have that $\sum_{n=1}^N \ma \Lambda_I[n] = 0$, leading to $\Tr(\ma L^{\uparrow})=0$.
\end{proof}

\subsection{Proof of Corollary \ref{corr:maximal-poly-diff}}
\begin{proof}
Denoting $\mathcal{R}$ the set of indices for the real eigenvalues and $\mathcal{I}$ the set of pairs of indices for conjugate pairs. The characteristic polynomial of $\ma L^{\circ}$ is given by
\begin{align*}
    p_{\ma L^{\circ}}(\lambda) &= \prod_{n\in\mathcal{R}} (\lambda - \ma \Lambda_R[n]) \prod_{(n_1,n_2)\in\mathcal{I}} (\lambda - \ma \Lambda_R[n_1])(\lambda - \ma \Lambda_R[n_2])\\
    &= \prod_{n\in\mathcal{R}} (\lambda - \ma \Lambda_R[n]) \prod_{(n_1,n_2)\in\mathcal{I}} (\lambda - \ma \Lambda_R[n_1])^2,
\end{align*}
as such the minimal polynomial 
\begin{equation*}
    m_{\ma L^{\circ}}(\lambda) = \prod_{n\in\mathcal{R}} (\lambda - \ma \Lambda_R[n]) \prod_{(n_1,n_2)\in\mathcal{I}} (\lambda - \ma \Lambda_R[n_1]),
\end{equation*}
and has degree $N-\text{card}(\mathcal{I})=N-N_C$. Therefore by Cayley-Hamilton theorem \cite{horn2012matrix} we have that $m_{\ma L^{\circ}}(\ma L^{\circ})=0$ which concludes our claim.
\end{proof}

\subsection{Proof of Corollary \ref{corr:maximal-poly-adv}}
\begin{proof}
The proof follows the same steps as the one of Corollary~\ref{corr:maximal-poly-diff} by considering the characteristic polynomial of $\ma L^{\uparrow}$ given by
\begin{align*}
    p_{\ma L^{\uparrow}}(\lambda) &= \prod_{n\in\mathcal{R}} (\lambda - \ma \Lambda_I[n]) \prod_{(n_1,n_2)\in\mathcal{I}} (\lambda - \ma \Lambda_I[n_1])(\lambda - \ma \Lambda_I[n_2])\\
    &= \lambda^{|\mathcal{R}|}\!\prod_{(n_1, n_2)\in\mathcal{I}} (\lambda - \ma \Lambda_I[n_1])(\lambda + \ma \Lambda_I[n_1]).
\end{align*}
In turn, the corresponding minimal polynomial reads as
\begin{equation*}
    m_{\ma L^{\uparrow}}(\lambda) = \lambda \prod_{(n_1, n_2)\in\mathcal{I}} (\lambda - \ma \Lambda_I[n_1])(\lambda + \ma \Lambda_I[n_1]),
\end{equation*}
with degree $N-\text{card}(\mathcal{R}) + 1=N-N_Z + 1$. Therefore by Cayley-Hamilton theorem \cite{horn2012matrix} we have that $m_{\ma L^{\uparrow}}(\ma L^{\uparrow})=0$ which concludes our claim.
\end{proof}

\section{Properties of \texorpdfstring{$\ma L^{\circ}$ and $\ma L^{\uparrow}$\\ for Normal Graph Operator}{L-circ and L-up for Normal Graph Operator}}
\label{app:prop-normal}

\subsection{Proof of Proposition \ref{prop:normal-symmetry-asymmetry}}
\begin{proof}
Let the directed graph Laplacian $\ma L$ be a normal operator, hence $\ma L$ is unitarily diagonalizable and admits the following eigendecomposition $\ma L=\ma U\ma \Lambda\ma U^H$\cite{axler2024linear}. Consequently, we have, 
\begin{eqnarray*}
(\ma L^\circ)^H &=& \ma U\ma \Lambda\ma U^H=\ma L^{\circ}\\    
(\ma L^\uparrow)^H &=& \ma U(\ma \Lambda)^H\ma U^H=-\ma L^{\uparrow}.
\end{eqnarray*}
Additionally, 
\begin{eqnarray*}
    \frac{\ma L + \ma L^H}{2} &=& \ma U\left(\frac{\ma \Lambda + (\ma \Lambda)^H}{2}\right)\ma U^H = \ma U\Re(\ma \Lambda)\ma U^H=\ma L^\circ \\
    \frac{\ma L - \ma L^H}{2} &=& \ma U\left(\frac{\ma \Lambda - (\ma \Lambda)^H}{2}\right)\ma U^H = \ma U\Im(\ma \Lambda)\ma U^H=\ma L^\uparrow,
\end{eqnarray*}
which concludes the proof.
\end{proof}

\subsection{Proof of Proposition \ref{prop:normal-divergence-free}}
\begin{proof}
The proof immediately follows from \cite{wu2005algebraic}, stating that if the corresponding directed graph Laplacian is normal (i.e., $\ma L \ma L^H = \ma L^H \ma L$), then the graph is Eulerian. Consequently, the graph corresponding to $(\ma L - \ma L^T)/2$ is also Eulerian. From the expression for divergence, we derive that $\nabla\cdot \ma L^{\uparrow} = 0$, concluding the proof.
\end{proof}


\section{Smoothness and Ordering of Eigenvectors}
\label{app:smoothness}

\subsection{Proof of Proposition \ref{prop:smoothness-ordering}}
\begin{proof}
The proof is straightforward as we directly evaluate the smoothness of each eigenvectors as follows
\begin{align*}
    \mathcal{S}^{\uparrow}(\vc u_n) &= ||\ma L^{\uparrow} (\ma L^{\circ})^{\dagger}\vc u_n||_2^2 \\
    &= \left(\ma L^{\uparrow} (\ma L^{\circ})^{\dagger}\vc u_n\right)^H\left(\ma L^{\uparrow} (\ma L^{\circ})^{\dagger}\vc u_n\right)\\
    &= -j\vct \delta_n^H\ma \Lambda_R^{\dagger}\ma \Lambda_I\ma U^H\ma Uj\ma \Lambda_I \ma \Lambda_R^{\dagger}\vct \delta_n \\
    &=|\ma \Lambda_I[n]|^2|\ma \Lambda_R^{\dagger}[n]|^2 \vc u_n^H\vc u_n\\
    &=|\ma \Lambda_I[n]|^2|\ma \Lambda_R^{\dagger}[n]|^2, \\[6pt]
    \mathcal{S}^{\circ}(\vc u_n) &= ||\ma L\vc u_n||_2^2\\
    &= (\ma L\vc u_n)^H(\ma L\vc u_n)\\
    &= \vct \delta_n^H \ma\Lambda^H\ma U^H\ma U\ma \Lambda\vct \delta_n \\
    &=|\ma \Lambda[n]|^2.
\end{align*}
We retrieve both $|\tan(\angle\ma \Lambda[n])|^2$ and $|\ma \Lambda[n]|^2$. Since the square function is increasing on $[0,\infty)$, $\mathcal{S}^\circ$ and $\mathcal{S}^\uparrow$ yield the modulus and argument orderings, respectively. 
\end{proof}

\section{Properties of the Graph Filters}
\label{app:graph-filters}

\subsection{Proof of Proposition \ref{prop:LSI-filters}}
\begin{proof}
    Given that $\ma L$, $\ma L^{\uparrow}$, and $\ma L^{\circ}$ are jointly diagonalizable, it follows that any polynomial or rational function of $\ma L$, $\ma L^{\uparrow}$, or $\ma L^{\circ}$ are also jointly diagonalizable with $\ma L$, $\ma L^\circ$, and $\ma L^\uparrow$. In other words, the filters commute with these operators; e.g., for $\ma L$ we have: 
    \begin{eqnarray*}
        \ma L \, \tilde{h}_{\text{sum}}(\ma L^{\circ}, \ma L^{\uparrow}) &=& \tilde{h}_{\text{sum}}(\ma L^{\circ}, \ma L^{\uparrow}) \, \ma L, \\
        \ma L \, \tilde{h}_{\text{rat}}(\ma L^{\uparrowcirc}) &=&\tilde{h}_{\text{rat}}(\ma L^{\uparrowcirc}) \, \ma L,
    \end{eqnarray*}
showing that both $\tilde{h}_{\text{sum}}$ and $\tilde{h}_{\text{rat}}$ are LSI filters.
\end{proof}

\subsection{Proof of Proposition \ref{prop:poly-real-coefs}}

\begin{proof}
Consider desired ideal filter $\hat{\ma H}_{\text{spec}}\in\mathbb{C}^{N\times N}$ to be a spectral filter such that for every graph signal $\vc x\in\mathbb{R}^N$, the filtered signal~\upshape IGFT$\{\hat{\ma H}_{\text{spec}}\hat{\vc x}\}\in\mathbb{R}^N$. Let us also denote $\mathcal{R}$ the set of indices for the real eigenvalues and $\mathcal{I}$ the set of pairs of indices for conjugate pairs. 

Starting with the coefficients $\vct \eta$ for $\tilde{h}_{\text{rat}}(\ma L^{\uparrowcirc})$,
we simplify notations by setting $\ma V=\ma V_K(j{\ma \Lambda_I \ma \Lambda_R^{\dagger}})^{\dagger}$. We derive
\begin{align}
    \vct \eta[k] &= \sum_{m=1}^{N}\ma V[k, m] \hat{\ma H}_{\text{spec}}[m, m] \nonumber \\
    &= \sum_{m,n\in \mathcal{I}} \ma V[k, m] \hat{\ma H}_{\text{spec}}[m, m] + \ma V[k, n] \hat{\ma H}_{\text{spec}}[n, n] \nonumber \\
    &+ \sum_{m\in \mathcal{R}} \ma V[k, m] \hat{\ma H}_{\text{spec}}[m, m] \nonumber \\ 
    &= \sum_{m,n\in \mathcal{I}} \ma V[k, m] \hat{\ma H}_{\text{spec}}[m, m] + (\ma V[k, m] \hat{\ma H}_{\text{spec}}[m, m])^* \nonumber \\
    &+ \sum_{m\in \mathcal{R}} \ma V[k, m] \hat{\ma H}_{\text{spec}}[m, m] \nonumber \\ 
    &= \sum_{m, n\in \mathcal{I}} 2\Re(\ma V[k, m] \hat{\ma H}_{\text{spec}}[m, m]) \nonumber \\
    &+ \sum_{m\in \mathcal{R}} \ma V[k, m] \hat{\ma H}_{\text{spec}}[m, m] \nonumber,
\end{align}
and thus $\vct \eta\in\mathbb{R}^K$.

Similarly, by replacing $\ma V$ by $\ma C_K(\ma \Lambda_R, j\ma \Lambda_I)^{\dagger}$ and considering the same ideal filter $\hat{\ma H}_{\text{spec}}$, one can show that the coefficients 
$\vct \alpha, \vct \beta\in\mathbb{R}^K$.

\end{proof}

\section*{Supplementary Materials}
Additional plots are listed in supplementary materials.


\ifCLASSOPTIONcaptionsoff
  \newpage
\fi



\bibliographystyle{ieeetr}
%





\bibliography{references}

\end{document}

%% file: macros.tex
\newcommand{\vc}[1]{{\mathbf #1}}
\newcommand{\vct}[1]{\pmb{#1}}
\newcommand{\ma}[1]{{\mathbf #1}}

\newcommand{\Tr}{\mathrm{Tr}}
\newcommand{\uparrowcirc}{
\tikz[baseline=0ex]{
\draw[->] (0,-0.08) -- (0,0.2);
\draw (0,0.05) circle (0.06);
}}

\DeclareMathOperator*{\argmin}{arg\,min}

\DeclarePairedDelimiter\floor{\lfloor}{\rfloor}

\newlength{\subcolumnwidth}

\newcommand{\nextsubcolumn}[1][]{%
  \cr\noalign{\hfill}
  \if\relax\detokenize{#1}\relax\else\hsize=#1\setlength{\subcolumnwidth}{\hsize}\fi
}